\renewcommand{\vec}[1]{\mathbf{#1}}
\newcommand{\threebinom}[2]{\ensuremath{{\genfrac{[}{]}{0pt}{}{#1}{#2}}_3}\xspace}
\newcommand{\Gen}{\mathsf{Gen}}
\newcommand{\Sign}{\mathsf{Sign}}
\newcommand{\Vrfy}{\mathsf{Verify}}
\newcommand{\advA}{\mathcal{A}}
\newcommand{\F}{\mathbb{F}}
\newcommand{\FF}{\F}
\newcommand{\Z}{\mathbb{Z}}
\newcommand{\ZZ}{\mathbb{Z}}
\newcommand{\hash}{\ensuremath{\mathsf{Hash}}\xspace}
\newcommand{\weight}{\mathsf{Weight}}
\newcommand{\htopoint}{\mathsf{HashToPoint}}
\newcommand{\decomp}{\mathsf{Decompress}}
\newcommand{\vcheck}{\ensuremath{\vec{v}_{\text{check}}}\xspace}
\newcommand{\vchecki}[1][i]{\ensuremath{\vec{v}_{\text{check},{#1}}}\xspace}
\newcommand{\Constraint}[1]{\ensuremath{\textsc{Constraint}(#1)}}
\newcommand{\Ring}{\ensuremath{\mathcal{R}}\xspace}
\newcommand{\Module}{\ensuremath{\mathcal{M}}\xspace}
\newcommand{\RedModule}{\ensuremath{\overline{\Module}}\xspace}
\newcommand{\Kernel}{\ensuremath{\mathcal{K}}\xspace}
\newcommand{\KernelSet}{\ensuremath{\mathcal{S}}\xspace}
\newcommand{\TSet}{\ensuremath{\mathcal{T}}\xspace}
\newcommand{\SGP}{\ensuremath{\mathsf{SEGP}}\xspace}
\newcommand{\Sigmacomp}[1][{}]{\ensuremath{\Sigma_{\text{comp}}^{#1}}\xspace}
\newcommand{\Lattice}{\ensuremath{\mathcal{L}}\xspace}
\newcommand{\kmax}{\ensuremath{k_\mathrm{max}}\xspace}
\newcommand{\kmin}{\ensuremath{k_\mathrm{min}}\xspace}
\newcommand{\Wave}[1][{}]{{\textsc{Wave{#1}}}\xspace}
\newcommand{\Squirrels}[1][{}]{{\textsc{Squirrels{#1}}}\xspace}
\newcommand{\squirrels}{\Squirrels}
\newcommand{\SPHINCS}[1][{}]{{\textsc{SPHINCS+{#1}}}\xspace}
\newcommand{\XMSSMT}[1][{}]{{\textsc{XMSS{${}^{MT}$}{#1}}}\xspace}
\newcommand{\secretprime}{\ensuremath{r}\xspace}
\newcommand{\secretprimecount}{\ensuremath{t}\xspace}
\newcommand{\secretprimes}{\ensuremath{\vec{r}}\xspace}
\newcommand{\squirrelsprime}{\ensuremath{p}\xspace}
\newcommand{\squirrelsprimecount}{\ensuremath{s}\xspace}
\newcommand{\squirrelsprimes}{\ensuremath{\vec{p}}\xspace}
\newcommand{\RNS}[2]{\ensuremath{[[#1]]_{#2}}\xspace}
\newcommand{\pk}{\PK}
\newcommand{\vk}{\VK}
\newcommand{\Primes}[1]{\ensuremath{\textsc{Primes}({#1})}\xspace}
\newcommandx{\fixme}[2][1=]{}
\newcommandx{\change}[2][1=]{}
\newcommandx{\improvement}[2][1=]{}
\newcommandx{\diff}[1]{\textcolor{blue}{#1}\xspace}
\newtheorem{assumption}{Assumption}
\begin{document}
\title{Compressed verification for post-quantum signatures with long-term public keys}
\titlerunning{Compressed public keys for conservative post-quantum signatures}

\author{Gustavo Banegas
\and Ana\"elle Le Dévéhat
\and Benjamin Smith}

\authorrunning{Banegas, Le Dévéhat, Smith}
\institute{
Inria and Laboratoire d’Informatique de l’Ecole polytechnique,\\
  Institut Polytechnique de Paris,
  Palaiseau, France\\
  \email{gustavo@cryptme.in}\\
  \email{anaelle.le-devehat@inria.fr}\\
  \email{smith@lix.polytechnique.fr}
}
\maketitle              \vspace{-1ex}
  \makeatletter \begingroup \makeatletter \def\@thefnmark{$*$}\relax \@footnotetext{\relax Author list in alphabetical order; see \url{https://ams.org/profession/leaders/CultureStatement04.pdf}. 
    This work was supported by the HYPERFORM consortium, funded by France through Bpifrance, and by the France 
    2030 program under grant agreement ANR-22-PETQ-0008 PQ-TLS.
\def\ymdtoday{\leavevmode\hbox{\the\year-\twodigits\month-\twodigits\day}}\def\twodigits#1{\ifnum#1<10 0\fi\the#1}Date of this document: \ymdtoday.}\endgroup
\begin{abstract}

    Many signature applications---such as root certificates, 
    secure software updates, and authentication protocols---involve 
    long-lived public keys that are transferred or installed once 
    and then used for many verifications.
    This key longevity makes post-quantum signature schemes with 
    conservative assumptions (e.g., structure-free lattices) 
    attractive for long-term security.
    But many such schemes, especially those with short 
    signatures, suffer from extremely large public keys. Even 
    in scenarios where bandwidth is not a major concern, large 
    keys increase storage costs and slow down verification.
    We address this with a method to replace large public keys in 
    GPV-style signatures with smaller, private verification keys. 
    This significantly reduces verifier storage and 
    runtime while preserving security. Applied to
    the conservative, short-signature schemes 
    \Wave and \Squirrels,
    our method compresses \Squirrels[-I] keys from 
    \SI{665}{\kilo\byte} to \SI{20.7}{\kilo\byte} and \Wave[822] keys 
     from \SI{3.5}{\mega\byte} to \SI{207.97}{\kilo\byte}.

    \keywords{Post-quantum cryptography  \and Digital Signatures \and Lattice-based cryptography \and Code-based cryptography \and Compressed GPV.}
\end{abstract}
 
\section{Introduction
}

Post-quantum signatures 
are a primary requirement for the transition towards quantum-resistant
cryptography.
Post-quantum lattice- and code-based signatures
can be roughly classified as \emph{conservative}
or \emph{structured},
according to whether their underlying hard problems
involve general codes and lattices,
or involve special algebraic structure.
These structures
facilitate important practical improvements
(often, much smaller public keys);
but they also allow the possibility of specialized attacks,
so structured schemes generally have weaker security arguments
(i.e., their assumptions are stronger). \change{maybe stress this even more to try and avoid comments saying squirrels and wave are not in the nist etc...}

For example: 
compare the structured lattice scheme \textsc{Falcon}~\cite{FALCON}
with the conservative lattice scheme \Squirrels~\cite{squirrels_sub}.
Both are based on the same GPV design~\cite{GPV08}.
At NIST post-quantum security level~1,
\textsc{Falcon} and \Squirrels
have comparable signature sizes:
666 and 1019 bytes, respectively.
But while the structured lattices of \textsc{Falcon}
give 897-byte public keys,
the unstructured lattices of \Squirrels push public-key sizes
up to 665 \emph{kilo}bytes.

Signature schemes with large public keys are unsuitable 
for applications where public keys are regularly transmitted,
such as TLS certificates.
They are better-suited to applications where
\begin{itemize}
    \item
        the public key is pre-installed on the verifier's device
        (for verifying signed software updates, for example,
        or root certificates), or
    \item
        the public key is transmitted, but the cost of transmission is
        amortised over many subsequent verifications
        (in \texttt{ssh} authentication, for example).
\end{itemize}
These applications often involve public keys with \emph{very} long 
lifetimes: 20-30 years for root certificates like ISRG Root X1 and 
GlobalSign Root R1, for example,
and a decade or more for IoT code-signing certificates
and government-issued digital IDs.

The long-term nature of keys in these applications makes conservative 
security assumptions reassuring, but working with 
very large public keys remains expensive and inconvenient.
This makes hash-based signatures like SLH-DSA
(SPHINCS+)~\cite{SPHINCSplus,SLH-DSA} an interesting choice:
they offer conservative security assumptions \emph{and}
very small public keys---but at the cost of large signatures and computationally 
intensive verification. \Squirrels
offers shorter signatures and faster verification, 
but its \SI{665}{\kilo\byte} public keys make long-term storage 
impractical.

\subsection{Compressed verification}
\label{subsec:Cverif}

We want to reduce public key storage for conservative GPV-style
signatures.
The core idea is that the verifier can (pre-)process the public key \PK once
to derive a much smaller verification key \VK (private to the verifier),
which they can then use in place of \PK
for confident---and often much faster---signature verification.

More formally: suppose we are given a signature scheme defined by
three algorithms,
with an implicit security parameter \(\lambda\):
\begin{itemize}
    \item
        \PKeyGen:
        returns a private key \SK
        and a public key \PK.
    \item
        \Sign:
        given a private key \SK
        and a message \(m\),
        returns a signature \(\sigma\). \item
        \Verify:
        given a putative signature \(\sigma\)
        on \(m\) under a public key \PK,
        returns \Accept or \Reject.
\end{itemize}
We will define three additional algorithms
to be used by the verifier:
\begin{itemize}
    \setcounter{enumi}{3}
    \item
        \CKeyGen:
        returns a (private) compression key \CK.
    \item
        \VKeyGen:
        given a public key \PK
        and a (private) compression key \CK,
        return a private verification key \VK.
    \item
        \CVerify:
        given a putative signature \(\sigma\) on \(m\)
        and a verification key \VK,
        returns \Accept or \Reject.
\end{itemize}
The goal is to define these functions such that
if \(\VK = \VKeyGen(\CK,\PK)\)
for some public key \PK and some \CK output by \CKeyGen,
then
\begin{enumerate}
    \item
        if \Verify{\(\sigma\), \(m\), \PK} = \Accept
        then \CVerify{\(\sigma\), \(m\), \VK}
        = \Accept;
    \item
        if \Verify{\(\sigma\), \(m\), \PK} = \Reject
        then \CVerify{\(\sigma\), \(m\), \VK} 
        = \Reject
        with probability $\ge 1 - 1/2^\mu$
        for a second security parameter \(\mu\);
        and
    \item
        the size of \VK
        is much smaller than the size of \PK.
\end{enumerate}
In terms of storage, $\CK$ is generated randomly 
and---once it has been used in \VKeyGen---need not be stored.
Therefore, the verifier only needs to retain \VK.

The verifier is
free to choose \(\mu\);
our goal is that
the probability that \CVerify accepts one forgery
after \(Q\) attempts is on the order of \(Q/\#\KernelSet\),
where \(\KernelSet\) is the verifier's compression-keyspace.
We will generally take \(\#\KernelSet \approx 2^{\mu}\)
with \(\mu \approx \lambda\),
assuming \(Q\) is relatively small
(in any case, \(Q \le 2^{64}\)).
The verifier may force a limit on \(Q\)
by refreshing \(\VK\)
after a given number of rejections.

Figure~\ref{fig:compressed-protocol} illustrates the 
signature protocol with compressed verification.  
\CK and \VK are private \emph{to the verifier} and are 
derived only from the signer's public key \PK,
and not the signer's private key \SK.
Additionally, the verification 
algorithm \CVerify does not require \PK or \CK.
    From the signer's point of view, the original signature scheme is unchanged.

\begin{figure}
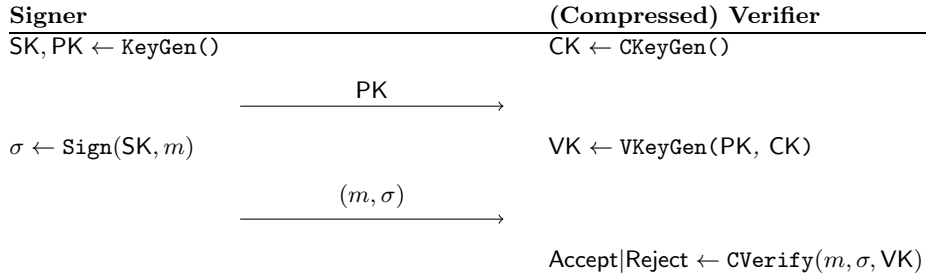

\centering
\pseudocodeblock[codesize=\footnotesize, bodylinesep=0.1\baselineskip]{
    \textbf{Signer} \<\< \textbf{(Compressed) Verifier}
    \\[][\hline]
    \SK, \PK \gets \PKeyGen{}\<\< \CK \gets \CKeyGen{} 
    \\
    \< \sendmessageright{top=\PK} \<
    \\
    \sigma \gets \Sign(\SK, m) \<\< \VK \gets \VKeyGen{\PK, \CK} 
    \\
    \< \sendmessageright*{(m,\sigma)} \<
    \\
    \<\<  \Accept \vert \Reject \gets \CVerify(m, \sigma, \VK) 
}
\caption{Compressed verification as a protocol.}
    \label{fig:compressed-protocol}
\end{figure}

\subsection{Results}
We introduce a framework for \emph{compressed verification} in
GPV-style signatures. 
The verifier chooses a private homomorphism \(\phi\)
and uses it to compress incoming public keys to compact 
\emph{private verification keys}.
We give security arguments for the general 
construction, and consider two conservative instantiations: 
\Squirrels~\cite{squirrels_sub}, a lattice-based signature, 
and \Wave~\cite{Wave,Wave-submission}, a code-based signature. 

\Squirrels and \Wave 
both have little special algebraic structure. 
This builds confidence in their security,
but it comes at an important practical cost:
as we see in Table~\ref{tab:results},
their public keys are \emph{very} big.  
Indeed, among the submissions to the NIST Postquantum Signatures
on-ramp with no vulnerabilities found during Round~1,
\Squirrels and \Wave had the largest public keys at most security
levels.\footnote{The \emph{classic} parameters of
    \href{https://www.uovsig.org/}{UOV},
    a conservative multivariate scheme,
    had larger public keys than \Squirrels at NIST PQ Security Level 5,
    but other UOV variants had smaller keys.
    In any case, \Wave is the undisputed super-heavyweight champion:
    its Level I public keys were larger than even the Level V keys of any other
    scheme.}
These oversized public keys would be a major factor
in the non-selection of these schemes for Round~2
of the standardization process.
They also make \Squirrels and \Wave prime candidates
to demonstrate the effectiveness of our technique.

We developed two implementations 
of both \Squirrels and \Wave:
\begin{enumerate}
    \item
        A comprehensive implementation in Python,
        and
\item
        A C implementation,
        based on the reference implementations
        of \squirrels~\cite{squirrels_impl}
and \Wave,
        used to measure performance improvements.
\end{enumerate}

Table~\ref{tab:results} shows the significant 
reduction in verifier storage achieved for 
both \Squirrels and \Wave while maintaining
security levels (and backward compatibility with the original schemes). \fixme{give a more detailled sentence on security here, add an assumption ?}
Specifically, we achieve a \textbf{compression ratio of up to 34x} for 
\Wave at higher security levels, and \textbf{over 26x} for \Squirrels at
comparable levels.
Compressed verification can also reduce verification time:
verification is \textbf{up to 9.26\% faster}
than ordinary verification for 
\Squirrels and \textbf{up to 30\% faster} for \Wave.

\begin{table}
    \caption{Signature scheme parameters and verification times.
        All sizes are in bytes.
        Compressed verification parameters are chosen such that
        \(\mu\approx\lambda\) 
        (see~\S\ref{sec:sq_mar} and~\S\ref{sec:sketch_wave_rip} for details).
        Verification times are cycle counts
        on an Intel Core i7-1365U processor
        running Arch Linux Kernel 6.11.5-arch1.
        ``Reference'' refers to C reference implementations,
        and ``Compressed'' to our own C code.
        \Wave signatures are variable-length;
        sizes here are upper bounds,
        and \Wave signature length roughly doubles for compressed
        verification.
    }
    \label{tab:results}
    \centering
    \resizebox{\textwidth}{!}{
        \begin{tabular}{l|@{\ }rr|rrr|rrr}
            \toprule
            & \multicolumn{2}{c|}{Reference}
            & \multicolumn{3}{c|}{Compressed verification}
            & \multicolumn{3}{c}{Verification time (kCycles)}
            \\
            & \(|\sigma|\) & \(|\PK|\)
            & \(|\VK|\)
            & (\(|\CK|\))
            & \(|\PK|/|\VK|\)
            & {Reference}
            & {Compressed}
            & \emph{Speedup}
            \\
            \midrule
            \multicolumn{9}{l}{\emph{NIST PQ Security Level 1, classical \(\lambda = 128\)}}
            \\
            \Squirrels[-I]
            &  1\,019 &     \textbf{681\,780}
            & \textbf{20\,700}
            &  (3\,360) 
            & \emph{32.9\(\times\)}
            &     280
            &     254
            & \emph{9.3\%}
            \\
            \Wave[822]
            &     822 &  \textbf{3\,677\,390}
            & \textbf{171\,594}
            &   (83\,822)
            & \emph{21.4\(\times\)}
            &  1\,101
            &     762
            & \emph{30.8\%}
            \\
            {\SPHINCS[-SHAKE-128s]}
            &  7\,856 & 32
            & ---
            & ---
            & ---
            &  3\,285
            & ---
            & ---
            \\
            {\XMSSMT-20-2}
            &  4\,963 & 64
            & ---
            & ---
            & ---
            & 2\,868
            & ---
            & ---
            \\
            \midrule
            \multicolumn{9}{l}{\emph{NIST PQ Security Level 3, classical \(\lambda = 192\)}}
            \\
            \Squirrels[-III]
            &  1\,554 &  \textbf{1\,629\,640}
            & \textbf{49\,824}
            &  (8\,480)
            & \emph{32.7\(\times\)}
            &     551
            &     520
            & \emph{5.69\%}
            \\
            \Wave[1249]
            &  1\,249 &  \textbf{7\,867\,598}
            & \textbf{266\,188}
            & (183\,134)
            & \emph{29.6\(\times\)}
            & 2\,330
            & 1\,865
            & \emph{19.9\%}
            \\
            {\SPHINCS[-SHAKE-192s]}
            & 16\,224 & 48
            & ---
            & ---
            & ---
            &  5\,374
            & ---
            & ---
            \\
            {\XMSSMT-40-4}
            &  9\,893 & 64
            & ---
            & ---
            & ---
            &  5\,729
            & ---
            & ---
            \\
            \midrule
            \multicolumn{9}{l}{\emph{NIST PQ Security Level 5, classical \(\lambda = 256\)}}
            \\
            \Squirrels[-V]
            &  2\,025 &  \textbf{2\,786\,580}
            &  \textbf{90\,598}
            & (15\,048)
            & \emph{30.8\(\times\)}
            &     916
            &     898
            & \emph{1.9\%}
            \\
            \Wave[1644]
            &  1\,644 & \textbf{13\,632\,308}
            & \textbf{370\,436}
            & (321\,507)
            & \emph{36.8\(\times\)}
            &  3\,911
            &  3\,198
            & \emph{18.2\%}
            \\
            {\SPHINCS[-SHAKE-256s]}
            & 29\,792 & 64
            & ---
            & ---
            & ---
            &  7\,567
            & ---
            & ---
            \\
            {\XMSSMT-60-6}
            & 14\,824 & 64
            & ---
            & ---
            & ---
            &  8\,743
            & ---
            & ---
            \\
            \bottomrule
        \end{tabular}
    }
\end{table}

As a baseline for conservative signatures,
Table~\ref{tab:results}
includes the hash-based schemes
\SPHINCS and \XMSSMT~\cite{XMSSMT}.
For \SPHINCS, the authors provide a benchmark script,
which we ran in our environment; in the table 
we report the fastest result produced by that
script~\cite{SPHINCSPLUS_impl}.
For \XMSSMT, we used the reference 
code~\cite{XMSS_impl}
with the smallest parameters for each security level.

\fixme{The ``illustrative example'' from the conclusion would be better
here.  It's currently commented out here; if we have space, we can add
it back in or something like it.}

\begin{remark}
Compressed verification may also benefit schemes like Ajtai-based 
    hash-and-sign~\cite{2012/MP,2019/CGN}, which offer strong SIS-based security.
    The benefit is limited for
    structured GPV-style schemes such as \textsc{Falcon}~\cite{FALCON}:
    public keys are already compact, and compressed verification requires the full \((s_1, s_2)\) 
    signature rather than just $s_2$, thus doubling signature size.
\end{remark}

\begin{remark}
    Our method is compatible with the PS-3~\cite{2005/Pornin--Stern} 
    and BUFF~\cite{cremers2021buffing} transforms, which strengthen 
    security in certain attack models. However, both require hashing
    \PK with the message—a costly step when \PK is large. 
    We suggest storing the much smaller $\hash(\PK)$ and 
    hashing that with the message instead.
\end{remark}

\subsection{Related work: flexible signatures and progressive verification}
\label{sec:flexible-progressive}

Fischlin's \emph{progressive verification} for MACs~\cite{2003/Fischlin}
can probabilistically \Reject early or \Accept with reduced confidence;
\cite[\S4]{2003/Fischlin} suggests an extension to signatures.
Le, Kelkar, and Kate's \emph{flexible
signatures}~\cite{2019/Le--Kelkar--Kate},
verified up to a real ``confidence level'' \(0 \le \alpha \le 1\),
quantify ``partial'' verification when expensive verification
operations are interrupted voluntarily by the user
or forcibly by the OS.
This can improve fault-tolerance and reduce the cost of
verification in embedded applications,
but it does not reduce public-key or signature sizes.
Indeed, the main targets in~\cite{2019/Le--Kelkar--Kate}
are hash-based signatures, where public keys
are already extremely compact;
but extensions to GPV signatures
are projected in~\cite[\S5.3]{2019/Le--Kelkar--Kate}.

Taleb and Vergnaud revisit progressive verification 
in~\cite{2020/Taleb--Vergnaud}, analyzing Bernstein's 
RSA trick (see~\S\ref{sec:rabin})
and GPV signatures (including \Wave). They propose 
verifying GPV signatures using a small set of linear combinations 
of columns based on a random linear code, achieving exponential 
confidence growth with runtime, unlike the linear growth 
in~\cite{2003/Fischlin} and~\cite{2019/Le--Kelkar--Kate}. 
However, their approach significantly increases public key size, 
which is the opposite of our goal.

Boschini, Fiore, Pagnin, Torresetti, and Visconti~\cite{BoschiniFPTV24}
propose an \emph{efficient verification} for signatures
which verify using a matrix-vector product $\vec{M}\vec{v}^\top$.
In an ``offline'' phase they compute a matrix $\vec{M}'$
formed by $k$ random linear combinations of the $n$ rows of $\vec{M}$;
then, in an ``online'' phase, they verify using $\vec{M}'\vec{v}^\top$,
with reduced confidence but with a speedup of $n/k$.
This is very similar to what we do with \Wave
in~\S\ref{sec:sketch_wave_rip},
but they repeat the offline phase for every verification
rather than maintaining the same $\vec{M}'$;
indeed, their focus is minimising online verification latency,
rather than reducing overall verification time or key sizes.

\section{Warmup: Bernstein's trick for Rabin--Williams
}\label{sec:rabin}

As a warmup,
we recall Bernstein's fast Rabin--Williams signature verification~\cite{2008/Bernstein}.
We write \(\Primes{\mu}\) for the set of (exactly) \(\mu\)-bit primes:
that is, 
\[
    \Primes{\mu} 
    := 
    \{ 2^{\mu-1} < p < 2^{\mu} \mid p \text{ is prime} \}
    \,. 
\]

\subsection{Verifying Rabin--Williams signatures}

A Rabin--Williams signature~\cite{1979/Rabin,1980/Williams}
on a message \(m\)
under a public key \(N = pq\)
is a tuple \(\sigma = (e, f, \salt, s)\) such that
\begin{equation}
    \label{eq:RW-verification}
    efs^2 \equiv \hash(\salt\parallel m)\pmod{N}
\end{equation}
where
\(1 < s < N\),
\(\salt\) is a salt value, and 
\(e\in\{-1,1\}\) and \(f \in \{1,2\}\)
are chosen such that \(s\) exists
for the given \(\salt\), \(m\), and \(N\).
That is:
given \(\sigma = (e,f,\salt,s)\), \(m\), and \(\PK = N\),
\Verify{\(\sigma\), \(m\), \PK}
returns \Accept
if and only if
\eqref{eq:RW-verification} holds.

If~\eqref{eq:RW-verification} is satisfied,
then there is a unique integer \(-2N < t < 2N\)
such that
\begin{equation}
    \label{eq:Bernstein-RW-expanded}
    efs^2 - tN = \hash(\salt\parallel m)
    \,.
\end{equation}
(The sign of \(t\) is equal to \(e\).)
Note that~\eqref{eq:Bernstein-RW-expanded}
holds over \(\Z\),
not just over \(\Z/N\Z\);
and any solution to~\eqref{eq:Bernstein-RW-expanded}
yields a solution to~\eqref{eq:RW-verification}
and vice versa,
so verifying~\eqref{eq:RW-verification}
or~\eqref{eq:Bernstein-RW-expanded}
is mathematically (though not algorithmically) equivalent.

Bernstein suggested speeding up verification by including \(t\) in 
\(\sigma\) and verifying~\eqref{eq:Bernstein-RW-expanded} modulo a 
random \(\lambda\)-bit prime \(\ell\) (with \(\lambda\) the security 
parameter). 
The verifier picks \(\ell \in \Primes{\lambda}\), computes 
\(N_\ell := N \bmod \ell\), and upon receiving 
\(\sigma = (e, f, \salt, s, t)\), checks  
\begin{equation}
    \label{eq:Bernstein-RW-reduced}
    efs_\ell^2 - t_\ell N_\ell \equiv h_\ell \pmod \ell,
\end{equation}
where \(s_\ell := s \bmod \ell\), \(t_\ell := t \bmod \ell\), and 
\(h_\ell := \hash(\salt \parallel m) \bmod \ell\).  
Since \(\ell \ll N\), this is faster than computing 
\(s^2 \bmod N\); the speedup increases with~\(\lambda\).
The trade-off: including \(t\) doubles the size of \(\sigma\), 
and generating \(\ell\) is relatively costly. However, 
as Bernstein notes, \(\ell\) can be reused across multiple 
verifications---even for different public keys---if kept secret, 
amortizing the cost.

This trick was proposed to speed up verification.
We observe that it also saves space
if many signatures are verified under the same \(N\),
since \(N_\ell\) can be stored instead of \(N\)
(and this saves even more time,
since \(N_\ell\) need not be recomputed).

In terms of our framework above,
\begin{itemize}
    \item
        \CKeyGen samples a random \(\lambda\)-bit prime \(\ell\);
    \item
        \VKeyGen{\(\CK = \ell\), \(\PK = N\)}
        returns \(\VK = (\ell, N_\ell := N \bmod{\ell})\);
    \item
        \CVerify{\(m\), \(\sigma = (e,f,\salt,s,t)\), \(\VK = (\ell, N_\ell)\)}
        returns \Accept
        if and only if
        \( efs^2 - t N_\ell \equiv \hash(\salt\parallel m) \pmod{\ell}\).
\end{itemize}

As Bernstein observes, the same technique applies to 
RSA signatures. However, if \(e\) is the public exponent, 
the resulting integer \(t\) is roughly \(N^{e-1}\), 
making the signature \(e\) times longer than a standard RSA signature.

\subsection{Security argument for Bernstein's trick}
\label{sec:RW-security}

Let's say that an \(\ell\)-forgery on a message \(m\)
for a public key \(\PK = N\)
is a vector \((e,f,\salt,s,t)\)
such that~\eqref{eq:Bernstein-RW-expanded}
(and hence~\eqref{eq:RW-verification}) fails,
but~\eqref{eq:Bernstein-RW-reduced} holds.
That is: an \(\ell\)-forgery is a putative expanded Rabin--Williams signature
that \Verify with \(\PK = N\) would safely reject,
but \CVerify with \(\VK = (\ell, N \bmod{\ell})\)
would accept.
(We assume that forging a signature
for~\eqref{eq:RW-verification} or~\eqref{eq:Bernstein-RW-expanded}
is infeasible.)

An adversary that knows \(\ell\) can easily construct
\(\ell\)-forgeries for any \(m\) and~\(N\).
Take a random \salt,
and
find a \(t\) such that \(x := \hash(\salt\parallel m) + tN\)
is a square modulo \(\ell\);
then,
compute \(s := x^{1/2} \pmod{\ell}\) 
(which is easy because \(\ell\) is prime);
finally, set \(e := 1\) and \(f := 1\).

Conversely,
if we can find an \(\ell\)-forgery \(\sigma = (e,f,\salt,s,t)\)
for \((m,N)\)
then
\[
    \ell \mid \Xi(\sigma,m,N) 
    \qquad
    \text{where}
    \qquad 
    \Xi(\sigma,m,N) := efs^2-tN-\hash(\salt\parallel m)
    \,.
\]
At just \(\lambda\) bits,
the prime \(\ell\) is sufficiently small 
(for cryptographic values of \(\lambda\))
to be recovered from \(\Xi(\sigma,m,N)\)
with
ECM~\cite{1987/Lenstra,1987/Montgomery,2006/Zimmermann--Dodson}.\footnote{Further: if we can find two \(\ell\)-forgeries \(\sigma_1\) and \(\sigma_2\)---\emph{not} 
    necessarily for the same \(m\) and \(N\)---then
    \(\ell\mid g := \gcd(\Xi(\sigma_1,m_1,N_1),\Xi(\sigma_2,m_2,N_2))\),
    and in fact probably \(\ell = g\).
}

An adversary \(\mathcal{A}\) who can compute an
\(\ell\)-forgery can therefore find \(\ell\), and vice versa.
But \(\mathcal{A}\)'s interaction with the verifier
is limited to submitting tuples \((\sigma,m,N)\),
and observing whether they are accepted or rejected:
that is, whether 
the unknown \(\ell\) divides \(\Xi(\sigma,m,N)\) or not.
Let 
\[
    \mu := \log_2\ell
    \qquad
    \text{and}
    \qquad
    \kappa := \lfloor{(\log_2N)/\mu}\rfloor
    \,.
\]
Observe that \(\mathcal{A}\) learns nothing from an \(\ell\)-forgery
attempt \((\sigma,m,N)\) such that \(\Xi(\sigma,m,N)\)
is not divisible by at least one \(\mu\)-bit prime,
and that if \(\Xi(\sigma,m,N) \not= 0\),
then it is divisible by at most \(2\kappa\) \(\mu\)-bit primes
(because \(|\Xi(\sigma,m,N)| < 2N^2\)).
Therefore,
if an adversary makes at most \(Q\) forgery attempts
against a verifier using a \(\mu\)-bit prime \(\ell\) for \VK,
then their success probability is at most
\begin{equation}
    \label{eq:Rabin-kappa}
    P(N,\mu,Q)
    :=
    \frac{2\kappa Q}{\#\Primes{\mu}}
    \,.
\end{equation}
It follows from~\cite[Corollary~5.3]{Dusart18}
that
\begin{equation}
    \label{eq:Dusart-2}
    0.975\frac{2^{\mu-1}}{(\mu-1)\log2}
    <
    \#\Primes{\mu}
    <
    \frac{2^{\mu-1}}{(\mu-1)\log 2}.
\end{equation}
Thus, \(P(N, \mu, Q) \approx Q \cdot (\log N)/2^{\mu - 2}\). For 
\(\mu \approx \lambda\), this bound remains negligible even across more 
verifications than could feasibly be generated, without needing to 
refresh \(\ell\) or track rejections.

\section{The general approach
}\label{sec:general}

\subsection{GPV signatures}
Consider a general GPV-style signature.
Let \Module be a finitely generated module
over an integral domain \Ring:
in practice,
\Module is either 
a lattice (with \(\Ring = \ZZ\))
or 
a code (with \(\Ring = \FF_q\)).
Fix a cryptographic hash function
\(\hash: \{0,1\}^*\to \Module\).
A public key is a random-looking
\(\vec{M} = (\vec{M}_0,\ldots,\vec{M}_{n-1}) \in \Module^n\)
for some system parameter \(n\).
A signature on a message \(m\) under \(\vec{M}\)
is a tuple \(\sigma = (\salt,\vec{s})\)
with \(\salt \in \{0,1\}^{\lambda}\) (a random salt)
and \(\vec{s} \in \Ring^n\)
such that
\begin{equation}
    \label{eq:Gvalidation}
    \Constraint{\vec{s}}
    \qquad
    \text{and}
    \qquad
    \vec{s}\vec{M} := \sum\nolimits_{i=0}^{n-1}s_i\vec{M}_i
    =
    \hash(\salt\parallel m)
\end{equation}
where \(\Constraint{\vec{s}}\)
is a predicate on \(\vec{s}\)
such as having small norm (in~\Squirrels)
or a fixed number of nonzero entries (in~\Wave).
An important variant has
\hash mapping into (a subset of) \(\Ring^n\)
instead of \(\Module\),
and~\eqref{eq:Gvalidation}
is replaced by
\begin{equation}
    \label{eq:Gvalidation_alt}
    \Constraint{\vec{s}}
    \qquad
    \text{and}
    \qquad
    \big(\hash(\salt\parallel m) + \vec{s}\big)\vec{M}
    =
    \vec{0}
\,.
\end{equation}

\begin{example}
    In \Squirrels,
    \(\Ring = \ZZ\)
    and \(\Module = \ZZ/\Delta\)
    for some large \(\Delta\)
    (though later we lift to \(\Module = \ZZ\));
    verification uses~\eqref{eq:Gvalidation_alt}
    where
    \(\Constraint{\vec{s}}\)
    is \(\Vert\vec{s}\Vert_2^2 \le \lfloor\beta^2\rfloor\)
    for a small system parameter
    \(\beta\).
    For example,
    \Squirrels[-I] has \(n = 1034\),
    \(\Delta\) a 5048-bit modulus
    formed as the product of 165 31-bit primes,
    and \(\lfloor\beta^2\rfloor = 2026590\).
\end{example}

\begin{example}
    In \Wave,
    \(\Ring = \FF_3\)
    and \(\Module = \FF_3^{n-k}\)
    for system parameters \(n\) and \(k\);
    verification uses~\eqref{eq:Gvalidation}
    where \(\Constraint{\vec{s}}\)
    is \(\#\{i \mid s_i \not= 0\} = w\)
    for some (large) \(w < n\).
    For example:
    \Wave[822] 
has
    \(n = 8576\),
    \(k = 4288\),
    and \(w = 7668 \approx 0.9n\).
\end{example}

\subsection{Compressed verification}
\label{sec:sigma-comp}
Let \(\Sigma\) be a general GPV-style signature on an \Ring-module
\Module as described above,
and fix a set \KernelSet of \Ring-submodules of \Module.
We define a compressed-verification signature scheme
\Sigmacomp with the same \PKeyGen and \Sign as in \(\Sigma\),
but with \Verify replaced by the following
\CKeyGen, \VKeyGen, and \CVerify:
\begin{itemize}
    \item
        \CKeyGen:
        samples a random \Kernel from \KernelSet,
        and returns the quotient homomorphism
        \(\CK := \phi: \Module \to \RedModule := \Module/\Kernel\)
        (which is unique up to isomorphism, and has kernel \(\ker\phi =
        \Kernel\)).
    \item
        \VKeyGen:
        takes \(\PK = \vec{M}\)
        and returns \(\VK := (\phi,\overline{\vec{M}} :=
        (\phi(\vec{M}_1),\ldots,\phi(\vec{M}_n)))\).
    \item
        \CVerify:
        given \(m\), \(\sigma = (\salt,\vec{s})\),
        and \VK,
        let \(\vec{h} = \hash(\salt\parallel m)\).
        If \(\sigma\)
        would normally be verified with~\eqref{eq:Gvalidation},
        then \CVerify returns \Accept
        if and only if
        \(\Constraint{\vec{s}}\)
        and
        \(\phi(\vec{s}\vec{M}) = \phi(\vec{h})\);
        and since \(\phi\) is a homomorphism of \(\Ring\)-modules,
        this means
        \begin{equation}
            \label{eq:Lvalidation}
            \Constraint{\vec{s}}
            \qquad
            \text{and}
            \qquad
            \sum\nolimits_{i=0}^{n-1}s_i\overline{\vec{M}}_i
            =
            \phi(\vec{h})
            \quad
            \text{in }
            \RedModule
            \,.
        \end{equation}
        If verification normally uses~\eqref{eq:Gvalidation_alt},
        then \CVerify returns \Accept
        if and only if
        \begin{equation}
            \label{eq:Lvalidation_alt}
            \Constraint{\vec{s}}
            \qquad
            \text{and}
            \qquad
            \sum\nolimits_{i=0}^{n-1}(s_i + h_i)\overline{\vec{M}}_i
            =
            \vec{0}
            \quad
            \text{in }
            \RedModule
            \,.
        \end{equation}
\end{itemize}
If~\eqref{eq:Lvalidation_alt} is used for verification,
then \(\CK = \phi\) need not be included in \VK.

If the verifier verifies
many signatures from the same signer,
then the cost of \CKeyGen and \VKeyGen is amortised over the
many subsequent \CVerify calls.
A good choice of \(\phi\) can also reduce the time required for each verification.

\subsection{Correctness and security}
\label{sec:general-security}

The correctness of the signature scheme \Sigmacomp described above follows 
from \(\phi\) being a homomorphism:
\eqref{eq:Gvalidation} implies~\eqref{eq:Lvalidation}
and~\eqref{eq:Gvalidation_alt} implies~\eqref{eq:Lvalidation_alt}.
Our security goal is EUF-CMA (Existential 
unforgeability under adaptive chosen-message attacks).
Recall that a signature scheme $\Sigma = (\Gen, \Sign, \Vrfy)$ is 
said to be EUF-CMA if for all probabilistic polynomial-time (PPT) adversaries 
$\advA$, the probability that $\advA$ wins the following game is negligible 
in the security parameter $\lambda$:
\begin{enumerate}
    \item The challenger gets $(\mathsf{pk}, \mathsf{sk}) \leftarrow \Gen(1^\lambda)$ and gives $\mathsf{pk}$ to the adversary $\advA$.
    
    \item The adversary $\advA$ has access to a signing oracle $\mathcal{O}_{\Sign}$ and a verification oracle $\mathcal{O}_{\Verify}$. It may adaptively query  $\mathcal{O}_{\Sign}$ on messages $m_1, \dots, m_q $ and receive valid signatures $\sigma_i$ for each. It also may query $\mathcal{O}_{\Verify}$ on signatures $\sigma_i$ and know if it is valid signature or not.
    
    \item $\advA$ outputs a pair $(m^*, \sigma^*)$.
    
    \item $\advA$ wins the game if:
    \begin{enumerate}
        \item $\Vrfy_{\mathsf{pk}}(m^*, \sigma^*) = 1$ (i.e., the signature is valid), and
        \item $m^* \notin \{m_1, \dots, m_q\}$ (i.e., $m^*$ was not queried to the signing oracle).
    \end{enumerate}
\end{enumerate}

We want to relate the EUF-CMA security
of a GPV signature with compressed verification
to the assumed EUF-CMA security of the original scheme.
We model forgery attempts as attempts at solving a hidden-structure problem:

\begin{definition}[Submodule Element Guessing Problem \(\SGP(\KernelSet,\TSet)\)]
    \label{def:submod-guessing}
    Let \Module be an \Ring-module,
    \TSet a finite subset of \Module,
    and \KernelSet a set of \Ring-submodules of~\Module.
    The \textbf{Submodule Element Guessing Problem}
    \(\SGP(\KernelSet,\TSet)\) is:
    given a membership oracle \( \mathcal{O}_\Kernel \) 
    for an unknown submodule \(\Kernel \in \KernelSet\)
    taking input in \TSet
    (i.e.: \(\mathcal{O}_\Kernel\) takes \( \vec{t} \in \TSet \) and returns
    \True if \(\vec{t} \in \Kernel\) and \False otherwise),
find an element \( \vec{t}^* \not= \vec{0} \in \TSet \)
    such that \( \mathcal{O}_\Kernel(t^*) = \True \),
    i.e., \( \vec{t}^* \in \Kernel \).
\end{definition}

The set \KernelSet in Definition~\ref{def:submod-guessing}
represents the set of possible kernels of the secret homomorphism \(\phi\),
and \TSet represents the vectors constructed by forgery attempts
before input to \(\phi\).

\begin{definition}
    \label{def:TSet}
    Let \(\Sigma\) be a general GPV signature scheme
    as above.  We define
    \begin{align*}
        \TSet(\Sigma)
        & := 
        \big\{
            \vec{s}\vec{M} - \vec{h}
            :
            \vec{h} \in \operatorname{Im}(\hash),
            \vec{s} \in \Ring^n
            \mid 
            \Constraint{\vec{s}}
        \big\}
        \intertext{if \(\Sigma\) uses~\eqref{eq:Gvalidation} for verification, or}
        \TSet(\Sigma)
        & := 
        \big\{
            (\vec{h} + \vec{s})\vec{M}
            : 
            \vec{h} \in \operatorname{Im}(\hash),
            \vec{s} \in \Ring^n
            \mid 
            \Constraint{\vec{s}}
        \big\}
    \end{align*}
    if \(\Sigma\) uses~\eqref{eq:Gvalidation_alt} for verification.
\end{definition}

\begin{theorem}
\label{thm:unforgeability}
    Let \(\Sigma\) be a general GPV-style signature scheme
    on an \Ring-module \Module,
    and fix a set \KernelSet of \Ring-submodules of \Module.
    For each \Kernel in \KernelSet,
    let \Sigmacomp[\Kernel] be the instance of \Sigmacomp
    where \CKeyGen samples \Kernel from \KernelSet.
    If \(\mathcal{A}\) is an algorithm running in time \(T\)
    that wins 
    the EUF-CMA game for \Sigmacomp[\Kernel]
    with probability \(P\),
    then there exists an algorithm \(\mathcal{B}\)
    running in time \(T+O(1)\)
    that succeeds with probability \(P\)
    in winning the EUF-CMA game for \(\Sigma\),
    or solving
the \(\SGP(\KernelSet,\TSet(\Sigma))\) instance corresponding to \Kernel.
\end{theorem}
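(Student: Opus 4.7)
The plan is to construct a reduction $\mathcal{B}$ that simulates $\mathcal{A}$'s view of the $\Sigmacomp[\Kernel]$ EUF-CMA game using only a $\Sigma$ signing oracle (from a real $\Sigma$ EUF-CMA challenger) and the SGP membership oracle $\mathcal{O}_\Kernel$ for the unknown $\Kernel \in \KernelSet$, and then to reread $\mathcal{A}$'s forgery as either a bona fide $\Sigma$-forgery or an $\SGP$ solution. The key structural observation is that in $\Sigmacomp[\Kernel]$ the signer is identical to the signer in $\Sigma$ and the adversary's public key is still $\vec{M}$; the homomorphism $\phi$ (and hence $\Kernel$) enters only through the verifier, and only via a single equality test in $\RedModule$, which by $\ker\phi = \Kernel$ is exactly a membership test in $\Kernel$.

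Concretely, $\mathcal{B}$ receives $\vec{M}$ from the $\Sigma$ challenger, hands $\vec{M}$ to $\mathcal{A}$ as the public key of $\Sigmacomp[\Kernel]$, and answers queries as follows. A signing query on $m_i$ is forwarded to the $\Sigma$ signing oracle and the reply passed back unchanged; these signatures are correctly distributed because $\Sigmacomp[\Kernel]$ and $\Sigma$ share $\Sign$ and $\PKeyGen$. On a verification query $(m,\sigma = (\salt,\vec{s}))$, $\mathcal{B}$ first checks $\Constraint{\vec{s}}$; if this fails it returns $\Reject$, and otherwise it forms $\vec{t} := \vec{s}\vec{M} - \hash(\salt\parallel m)$ when $\Sigma$ uses~\eqref{eq:Gvalidation} (respectively $\vec{t} := (\hash(\salt\parallel m)+\vec{s})\vec{M}$ when $\Sigma$ uses~\eqref{eq:Gvalidation_alt}), which by Definition~\ref{def:TSet} lies in $\TSet(\Sigma)$, and returns $\Accept$ iff $\mathcal{O}_\Kernel(\vec{t}) = \True$. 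Because $\phi$ is an \Ring-module homomorphism with kernel exactly $\Kernel$, this bit is precisely the output of $\CVerify$ on $(\VK = (\phi,\overline{\vec{M}}), m, \sigma)$, so $\mathcal{A}$'s simulated view is distributionally identical to a real one and $\mathcal{A}$ still halts with a successful forgery with probability $P$.

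When $\mathcal{A}$ outputs a winning pair $(m^*,\sigma^* = (\salt^*,\vec{s}^*))$, $\mathcal{B}$ computes the associated $\vec{t}^* \in \TSet(\Sigma)$ and branches. If $\vec{t}^* = \vec{0}$, then~\eqref{eq:Gvalidation} (resp.~\eqref{eq:Gvalidation_alt}) already holds over $\Module$ itself, so $(m^*,\sigma^*)$ is a valid $\Sigma$ forgery on the never-queried $m^*$, and $\mathcal{B}$ submits it to the $\Sigma$ challenger. Otherwise $\vec{t}^*$ is a nonzero element of $\TSet(\Sigma)$ with $\mathcal{O}_\Kernel(\vec{t}^*) = \True$, i.e., a solution to the $\SGP(\KernelSet,\TSet(\Sigma))$ instance corresponding to $\Kernel$, and $\mathcal{B}$ returns $\vec{t}^*$. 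Every winning transcript of $\mathcal{A}$ lands in exactly one of these two exhaustive cases, so $\mathcal{B}$ wins the disjunction with the same probability $P$; the overhead beyond running $\mathcal{A}$ is one $\mathcal{O}_\Kernel$ query per verification call plus a final constant-size check, fitting within the $T + O(1)$ budget in the oracle model.

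The main point I expect to have to state carefully is that the simulation is \emph{perfect} rather than merely statistically close, so that $\mathcal{A}$'s success probability transfers verbatim to $\mathcal{B}$'s disjoint success; this in turn reduces to the homomorphism property of $\phi$ applied uniformly in the verification simulation and in the case split on $\vec{t}^*$, and should go through cleanly.
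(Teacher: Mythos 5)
Your proposal is correct and follows essentially the same reduction as the paper: forward signing queries, simulate compressed verification via $\mathcal{O}_\Kernel$, and at the end branch on whether $\vec{t}^*=\vec{0}$ to decide between a $\Sigma$-forgery and an $\SGP$ solution. One small deviation, in your favor: the paper's proof first queries the $\Sigma$ verification oracle $\mathcal{O}_{\Vrfy}$ and, on a negative answer, falls through to $\mathcal{O}_\Kernel(\vec{t})$ without re-examining $\Constraint{\vec{s}}$ — whereas you check $\Constraint{\vec{s}}$ explicitly before forming $\vec{t}$, which is both sufficient (so $\mathcal{O}_{\Vrfy}$ is not needed at all in the simulation) and cleaner, since it guarantees that every $\mathcal{O}_\Kernel$ query is on an element of $\TSet(\Sigma)$ and that a rejection caused solely by a failed constraint is simulated correctly even when $\vec{t}=\vec{0}$.
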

\begin{proof}
    The challenger for EUF-CMA game of $\Sigma$ gives \(\PK = \vec{M}\) to the algorithm \(\mathcal{B}\). \(\mathcal{B}\) has access to a signing oracle $\mathcal{O}_{\Sign}$, a verification oracle $\mathcal{O}_{\Vrfy}$ both associated to $\Sigma$ and a submodule membership oracle \( \mathcal{O}_K \) associated to \(\SGP(\KernelSet,\TSet(\Sigma))\). \(\mathcal{B}\) calls \(\mathcal{A}\) on \(\PK = \vec{M}\). \(\mathcal{B}\) answers a signing oracle query $m$  from \(\mathcal{A}\) by $\mathcal{O}_{\Sign}(m)$ as the signing process is the same for \(\Sigma\) and  \Sigmacomp[\Kernel]. If \(\mathcal{A}\) makes a query to the verification oracle for \Sigmacomp[\Kernel] on a signature $\sigma=(\salt,\vec{s})$, $\mathcal{B}$ queries $\mathcal{O}_{\Vrfy}(\sigma)$. If $\mathcal{O}_{\Vrfy}(\sigma)=1$ then $\mathcal{B}$ answers $1$ to $\mathcal{A}$. Otherwise, If \(\Sigma\) verifies with~\eqref{eq:Gvalidation} then \(\mathcal{B}\) sets  \(\vec{t} := \vec{s}\vec{M} - \hash(\salt\parallel m)\); if \(\Sigma\) verifies with~\eqref{eq:Gvalidation_alt} instead, then \(\mathcal{B}\) sets \(\vec{t} := (\hash(\salt\parallel m) + \vec{s})\vec{M}\). \(\mathcal{B}\) answers the verification oracle query by \( \mathcal{O}_K(\vec{t}) \).

    If \(\mathcal{A}\) fails,
    then \(\mathcal{B}\) fails.
    Otherwise, it receives a \((m^*,\sigma^* = (\salt^*,\vec{s}^*))\)
    such that \(\Constraint{\vec{s}^*}\) holds
    and \(\CVerify(m^*,\sigma^*,\VK) = \Accept\).
    $\mathcal{B}$ computes $\vec{t}^*$,
    if \(\vec{t}^* = \vec{0}\),
    then \(\Verify(m^*,\sigma^*,\PK)\) would return \Accept,
    so \(\mathcal{B}\) returns \((m^*,\sigma^*)\) and wins the EUF-CMA game associated to $\Sigma$.
    Otherwise,
    \(\vec{t}^*\) is a nonzero element of \(\Kernel\),
    so \(\mathcal{B}\) returns \(\vec{t}^*\) and finds a solution to \(\SGP(\KernelSet,\TSet(\Sigma))\).
    \qed
\end{proof}

Theorem~\ref{thm:unforgeability}
tells us that if \(\Sigma\) is EUF-CMA secure,
then forging a signature for \Sigmacomp[\Kernel]
is \emph{at least as hard} as solving the
\(\SGP(\KernelSet,\TSet(\Sigma))\) instance
corresponding to \Kernel.
We can therefore choose secure parameters for \Sigmacomp
by choosing the set \KernelSet of compression keys
such that random instances of \(\SGP(\KernelSet,\TSet(\Sigma))\) are hard.

\subsection{The hardness of \SGP}
\label{sec:general-hardness}

The hardness of \(\SGP(\KernelSet,\TSet)\)
depends on the choice of \KernelSet and \TSet,
and also on the properties of \Ring and \Module 
(and \(\Module/\Kernel\) for \Kernel in \KernelSet).
There are a few general things that we can say
before returning to the problem in the concrete cases of \Squirrels and
\Wave later.
Let \(\mathcal{O}\) be a membership oracle for a secret \Kernel
sampled uniformly random from \KernelSet,
and accepting only queries from \TSet;
and let \(\phi : \Module \to \Module/\Kernel\)
be the quotient homomorphism.
We can assume \(\phi(\TSet) = \Module/\Kernel\).
The adversary's goal is to find some \(\vec{t} \not= \vec{0}\) in \TSet
such that \(\mathcal{O}(\vec{t}) = \True\):
implicitly,
\(\vec{t} \in (\Kernel\setminus\{\vec{0}\})\cap\TSet\).

The adversary makes a series of adaptive queries 
\(\vec{t}^{(1)},\vec{t}^{(2)},\ldots\) to \(\mathcal{O}\).
Suppose \(\mathcal{O}(\vec{t}^{(i)})=\False\)
for \(1 \le i \le Q\).
The adversary wants \(\phi(\vec{t}^{(Q+1)}) \not= \phi(\vec{t}^{(i)})\)
for \(1 \le i \le Q\),
because none of the \(\phi(\vec{t}^{(i)}\) were \(\vec{0}\).
In the best case for the adversary
all these values are distinct,\footnote{Note that collisions \(\phi(\vec{t}^{(i)}) = \phi(\vec{t}^{(j)})\) 
    are not useful to the adversary, because they cannot detect them without
    querying \(\mathcal{O}\) on all the \(\vec{t}^{(i)} - \vec{t}^{(j)}\).
}
so if the adversary chooses \(\vec{t}^{(Q+1)}\) arbitrarily then
\[
    P\big[ \mathcal{O}(\vec{t}^{(Q+1)}) \big]
    \le
    \frac{
        1
    }{
        \#(\Module/\Kernel)
        -
        Q
    }
    \,.
\]
(If the adversary can choose \(\vec{t}^{(Q+1)}\)
such that it maps into a proper submodule \(\mathcal{N} \subset
\Module/\Kernel\) then we can replace \(\#(\Module/\Kernel)\)
with \(\#\mathcal{N}\) and \(Q\) with the number of prior queries
landing in \(\mathcal{N}\), but this is not significant in our
applications.)

In the meantime, the adversary has also learned
that
\[
    \Kernel \notin \bigcup_{i=1}^{Q}\KernelSet_{\vec{t}^{(i)}}
    \quad
    \text{where}
    \quad
    \KernelSet_\vec{t}
    :=
    \{\Kernel\in\KernelSet\mid\vec{t}\in\Kernel\}
    \subset
    \KernelSet
    \,.
\]
In our applications \KernelSet is finite,
so there exists an integer
\[
    \kappa_\TSet
    :=
    \max\{\#\KernelSet_\vec{t} : \vec{t}\in\TSet\}
    \,;
\]
each unsuccessful query eliminates up to \(\kappa_\TSet\)
candidate kernels from consideration.
The adversary must choose
\(\vec{t}^{(Q+1)}\)
such that
\(
    \KernelSet_{\vec{t}^{(Q+1)}} 
    \not\subset 
    \bigcup_{i=1}^{Q}\KernelSet_{\vec{t}^{(i)}}
\)
to have any chance of success.
If \(\vec{t}^{(Q+1)}\) is chosen arbitrarily
among the elements of \TSet
such that 
\(
    \KernelSet_{\vec{t}^{(Q+1)}} 
    \setminus
        \bigcup_{i=1}^{Q}\KernelSet_{\vec{t}^{(i)}}
\)
is maximal,
then the probability of success
is
\begin{equation}
    \label{ineq:Sx}
    P\big[ \mathcal{O}(\vec{t}^{(Q+1)}) \big]
    \le
    \frac{
        \#\big(
            \KernelSet_{\vec{t}^{(Q+1)}}
            \setminus
            \bigcup_{i=1}^{Q}\KernelSet_{\vec{t}^{(i)}}
        \big)
    }{
        \#\KernelSet 
        -
        \# \bigcup_{i=1}^{Q}\KernelSet_{\vec{t}^{(i)}}
    }
    \le
    \frac{
        \kappa_\TSet
    }{
        \#\KernelSet - \kappa_\TSet Q
    }
    \,.
\end{equation}

For EUF-CMA security with parameter \(\mu\) against this adversary,
we need to ensure the success probability after \(Q\) queries
is \(\le 2^{-\mu}\),
so
\begin{equation}
    \label{eq:query-limit}
    \min(\#\KernelSet/\kappa_\TSet,\#(\Module/\Kernel))
    \ge
    2^\mu + Q
    \,.
\end{equation}
We should therefore sample \Kernel from an \KernelSet
chosen
such that (at least)
\begin{equation}
    \label{eq:KernelSet-bounds}
    \#\KernelSet/\kappa_\TSet \ge 2^\mu
    \qquad
    \text{and}
    \qquad
    \#(\Module/\Kernel) \ge 2^\mu
    \text{ for each }
    \Kernel \in \KernelSet
    \,,
\end{equation}
and we should replace the verification key
when the number of (failed) compressed verifications approaches \(2^\mu\).

Heuristically,
we will suppose that the adversary cannot
improve on any strategy that simply enumerates queries \(\vec{t}^{(i+1)}\)
while maximising 
\(\#\KernelSet_{\vec{t}^{(i+1)}}\setminus \bigcup_{j=1}^{i}\KernelSet_{\vec{t}^{(j)}}\).
Indeed, to do so they would need more information
on \(\phi(\vec{x})\) for unqueried \(\vec{x}\).
Our heuristic is that this information has to come from exploiting the
\Ring-module structures of \Module and \(\Module/\Kernel\),
but these structures generally do not help.

First,
\(
    (\phi(\vec{x}) \not= \vec{0})\wedge(\phi(\vec{y}) \not= \vec{0})
    \centernot\implies
    \phi(\vec{x}+\vec{y}) \not= \vec{0}
\)
unless \(\vec{y} \in \Ring\vec{x}\) or \(\vec{x}\in \Ring\vec{y}\).
This tells us that given the results of
\(\mathcal{O}(\vec{t}^{(i)})\) for \(1 \le i \le Q\),
we cannot predict the result of \(\mathcal{O}(\vec{t}^{(Q+1)})\)
for general linear combinations
\(\vec{t}^{(Q+1)} = \sum_{i=1}^Q\alpha_i\vec{t}^{(i)}\).

Looking at scalar multiplication,
there are two cases.
\begin{enumerate}
    \item
        If \(\alpha \not= 0 \in \Ring\) is invertible on \(\Module/\Kernel\),
        then
        \(\phi(\vec{x}) = \vec{0} \iff \phi(\alpha\vec{x}) = \vec{0}\)
        for all \(\vec{x}\).
        In this case, if we know \(\mathcal{O}(\vec{x}) = \False\),
        then we can predict that \(\mathcal{O}(\alpha\vec{x}) = \False\)
        \emph{without} querying \(\mathcal{O}\) on \(\alpha\vec{x}\)
        (and vice versa).
        But these ``virtual'' queries cannot not help the adversary,
        because
        \(\KernelSet_{\alpha\vec{x}} = \KernelSet_{\vec{x}}\)
        for all such \(\alpha\).
    \item
        If \(\alpha \not= 0 \in \Ring\) is \emph{not} invertible on
        \(\Module/\Kernel\),
        then 
        \(\phi(\vec{x}) = \vec{0} \implies \phi(\alpha\vec{x}) = \vec{0}\)
        for all \(\vec{x}\),
        but the converse does not hold;
        likewise, \(\KernelSet_{\vec{x}} \subset \KernelSet_{\alpha\vec{x}}\)
        but the inclusion may be strict.
        If such elements \(\alpha\) are known,
        then the adversary should query on \(\alpha\vec{x}\)
        instead of \(\vec{x}\) to maximise
        \(\#\KernelSet_{\alpha\vec{x}}\)
        (and hence \(\#\KernelSet_{\vec{t}^{(i+1)}}\setminus
        \bigcup_{j=1}^{i}\KernelSet_{\vec{t}^{(j)}}\)).
        If these \(\alpha\) exist but are \emph{not} known to the
        adversary,
        then they should try to guess them in order to approach the
        ideal bound of~\eqref{ineq:Sx}.
\end{enumerate}
In our application to \Wave,
\(\Ring = \FF_3\) is a field,
so we are always in the first situation.
For Rabin--Williams and \Squirrels,
\(\Ring = \ZZ\)
and \(\Module/\Kernel = \ZZ/d\ZZ\)
for some \(d\) unknown to the adversary.
Every query is \(\alpha\cdot 1\)
for some \(\alpha\),
and the adversary's goal is precisely to find \(\alpha \not= 0\)
divisible by the unknown \(d\):
that is, they are (or are trying to be) in the second situation.

\begin{remark}
    The bounds in~\eqref{eq:KernelSet-bounds}
    may be overly pessimistic:
    even if \(\SGP(\KernelSet,\TSet(\Sigma))\) is hard,
    forging signatures in \Sigmacomp[\Kernel]
    for random \Kernel in \KernelSet may be significantly harder.
With GPV-style signatures,
    a solution \(\vec{t}\)
    to the \(\SGP(\KernelSet,\TSet(\Sigma))\) instance for \Kernel
    gives a forgery against \Sigmacomp[\Kernel]
    only if we can construct \((m,\sigma=(\salt,\vec{s}))\)
    mapping to \(\vec{t}\);
    and this is made difficult by the
    need to satisfy \Constraint{\vec{s}}.
    After all, if we could find \((m,\sigma)\)
    for arbitrary \(\vec{t}\)
    then we could find them for \(\vec{0}\),
    and thus construct forgeries
    in the original scheme \(\Sigma\).
    For a computationally bounded adversary,
    it may be infeasible to construct \((m,\sigma)\)
    yielding implicit queries \(\vec{t}\) with large
    \(\#\KernelSet_\vec{t}\),
    which means the success probability is actually much lower---and then
    we can make \KernelSet (and thus, potentially, \(|\VK|\))
    much smaller.
    Hence,
    while setting parameters to make \SGP hard 
    will guarantee unforgeability,
    these parameters may also be much larger than
    what is required for EUF-CMA in practice.
\end{remark}

\section{Compressed verification for \Squirrels }\label{sec:sq_mar}

Now we turn our attention to \Squirrels.
The challenge here is to define compressed verification algorithms
that, like \Squirrels, avoid multiprecision arithmetic.
To simplify presentation,
we use the following notation:

\begin{definition}
    Given a list of primes \(\vec{m} = (m_1,\ldots,m_s)\),
    we write
    \[
        \RNS{x}{\vec{m}}
        :=
        \big(x \bmod{m_1},\ldots,x \bmod{m_s}\big)
        \quad
        \text{for all }
        x \in \ZZ
        \,.
    \]
\end{definition}

\subsection{The \Squirrels signature scheme}
\label{sec:Squirrels}
\squirrels is a GPV signature on
unstructured lattices. More precisely,
it uses co-cyclic lattices:
\(n\)-dimensional lattices \Lattice
such that \(\ZZ^n/\Lattice=\ZZ/\Delta\ZZ\)
for some \(\Delta\).
Co-cyclic lattices are dominant among
full-rank integer lattices~\cite{Nguyen--Shparlinski}:
their natural density is \(\approx 85\%\).
\Squirrels
works with co-cyclic lattices \Lattice
of determinant
\[
    \Delta := \squirrelsprime_1\cdots \squirrelsprime_\squirrelsprimecount
\]
where
\(
    \squirrelsprimes
    =
    (\squirrelsprime_1,\ldots,\squirrelsprime_\squirrelsprimecount)
\)
is a fixed tuple
of 31-bit primes
(the length \squirrelsprimecount
depends on the security parameter).
Table~\ref{tab:sizes_squirrel} gives
the \Squirrels parameter sets.

\begin{table}[htp]
    \centering
    \caption{Parameters for \squirrels instances.}
    \label{tab:sizes_squirrel}
    \resizebox{\textwidth}{!}{
    \begin{tabular}{l|rrrrr}
        \toprule
        NIST Security Level
        & 1      & 2     & 3     & 4      & 5
        \\
        \midrule
        Lattice dimension $n$ &
        $1034$   & $1164$   & $1556$    & $1718$    & $2056$
        \\
        Hash space size \(q\) & 4096 & 4096 & 4096 & 4096 & 4096
        \\
        Max. signature square norm \(\lfloor\beta^2\rfloor\)
        & 2\,026\,590 & 2\,442\,439 & 4\,512\,242 & 3\,659\,372 &
        5\,370\,115
        \\
        Number \squirrelsprimecount of small primes
        & 165 & 188 & 262 & 275 & 339
        \\
        Bitlength of \(\Delta\)
        & 5048 & 5738 & 8017 & 8402 & 10347
        \\
        \midrule
        Signature Size ($\SI{}{\byte}$)             & $1019$   & $1147$   & $1554$    & $1676$    & $2025$
        \\
        Public Key Size ($\SI{}{\byte}$)     & $681\,780$ & $874\,576$ & $1\,629\,640$ & $1\,888\,700$ & $2\,786\,580$
        \\
        \bottomrule
    \end{tabular}
    }
\end{table}

\squirrels is built on the one-way
function 
\begin{equation}
    \begin{split}
        f : D_n & \longrightarrow \ZZ/\Delta\ZZ \\
        \vec{x} & \longmapsto \vec{x}\vec{A}^T \pmod \Delta,
    \end{split}
\end{equation}
where $\vec{A}$ is the matrix defining \Lattice
and \(D_n = \{\vec{e} \in \Z^n \mid \Vert\vec{e}\Vert \leq \beta\}\).
One-wayness depends on the hardness of
$GSIS_{n,\Delta,\beta}$,
that is,
finding a vector \(\vec{x}\)
such that \(\vec{x}\vec{A}^T \equiv 0 \pmod \Delta\)
and \(\Vert\vec{x}\Vert \leq \beta\) for some small \(\beta\).
\change{should we introduce the problem GSIS? B: that sentence defines GSIS.}

Suppose \Lattice is co-cyclic of dimension \(n\)
and determinant \(\Delta\).
We can specify \Lattice
with the row-HNF of its generating matrix, which
is determined by a vector
\[
    \vcheck = (\vchecki[1],\ldots,\vchecki[n])
    \in
    (\ZZ/\Delta\ZZ)^n
    \quad
    \text{with}
    \quad
    \vchecki[n] = -1
    \,.
\]
The public key encodes \vcheck
as a list of lists of residues
mod the small primes:
\[
    \pk
    =
    \big(
        ( v_{i,j} := \vchecki \bmod \squirrelsprime_j)_{i=1}^{n-1}
    \big)_{j=1}^\squirrelsprimecount
\]
(since \(\vchecki[n] = -1\) by convention, 
there is no need to store it or any of the~\(v_{n,j}\)).
The \(v_{i,j}\) are encoded as \emph{signed} twos-complement 32-bit integers,
but they are all non-negative 
(except the \(v_{n,j}\), which are all \(-1\) and not stored
anyway);
in particular, \(0 \le v_{i,j} < 2^{31}\) for all \(1 \le i < n\)
and \(1 \le j \le s\).

The private key encodes a ``good'' basis
for \Lattice,
which allows sampling short vectors in \Lattice
following a Gaussian distribution using Klein's trapdoor sampler.
\PKeyGen ensures that 
each \(\vchecki\)
looks like a uniform random integer modulo~\(\Delta\).

We now focus on \Squirrels verification
(\PKeyGen and \Sign are detailed
in~\cite{squirrels_sub}).
The verifier accepts $\sigma = (\vec{s},\salt)$
if two conditions are met:
\begin{enumerate}
    \item  The vector $\vec{s}$
        is short:
        $\Vert\vec{s}\Vert \leq \beta$,
        which is more easily checked as
        \[
            \label{eq:squirrels-norm-condition}
            \Constraint{\vec{s}}:
            \Vert\vec{s}\Vert^2 \le \lfloor\beta^2\rfloor
            \,.
        \]
    \item The vector $\vec{c} := \vec{s} + \vec{h}$
        (where \(\vec{h} := \hash(\salt\parallel\vec{m})\))
        is in \Lattice.
        That is,
        \begin{equation}
            \label{eq:squirrels-verification}
            \sum\nolimits_{i=1}^{n} c_i\vchecki
            \equiv
            0
            \pmod{\Delta}
            \,,
        \end{equation}
        or equivalently (by the CRT)
        \begin{equation}
            \label{eq:squirrels-verification-CRT}
            \sum\nolimits_{i=1}^n c_iv_{i,j}
            \equiv 0
            \pmod{\squirrelsprime_j}
            \qquad
            \text{for all }
            1 \le j \le \squirrelsprimecount
            \,.
        \end{equation}
\end{enumerate}
We can thus verify by
checking~\eqref{eq:squirrels-verification-CRT}
for each of the \(\squirrelsprime_j\) in turn,
as in Algorithm~\ref{alg:verify}.

\begin{algorithm}[H]
 \scriptsize
    \Parameters{$q$, $n$, $\lfloor{\beta^2}\rfloor$,
        and $P_\Delta = (\squirrelsprime_1,\ldots,\squirrelsprime_m)$
    }
    \KwIn{Signature $\sigma= (\salt, \underline{\vec{s}})$,
        message $m$,
        public key $\pk = ((v_{i,j})_{i=1}^{n-1})_{j=1}^m$ with
        \(v_{i,j} = \vchecki \bmod{\squirrelsprime_j}\)
    }
    \KwOut{\Accept if \(\sigma\) is a valid signature on \(m\) under
    \pk, otherwise \Reject.}
$\vec{s} \gets \decomp(\underline{\vec{s}})$
    \;
\If{$\vec{s} = \perp$ \textbf{or} 
        $\Vert\vec{s}\Vert_2^2 > \lfloor \beta^2 \rfloor$}{
        \Return \Reject
    }
$\vec{c} \gets \vec{s} + \htopoint(m\parallel\salt, q, n)$
    \;
\ForEach(\tcp*[f]{Trivially parallelizable}){$1 \le j \le m$}{
        $S \gets \sum_{i=0}^{n-1} c_i v_{i,j} \bmod \squirrelsprime_j$
        \;
        \If(\tcp*[f]{Uses \(\vchecki[n] = -1\)}){
            $S - c_n \neq 0 \mod \squirrelsprime_j$
        }{
            \Return \Reject
        }
    }
    \Return \Accept
    \;
\caption{Verification algorithm for \squirrels. \label{alg:verify}}
\end{algorithm}

\subsection{Homomorphisms for \Squirrels verification}

\Squirrels is an instance of our general framework
with \(\Ring = \ZZ\)
and \(\Module = \ZZ/\Delta\ZZ\).
But the only homomorphisms from \(\ZZ/\Delta\ZZ\)
map through \(\ZZ/\Delta'\ZZ\) for \(\Delta'|\Delta\),
and while a verifier could choose a secret \(\phi\)
by choosing a secret subset of the \(\squirrelsprime_j\),
an adversary who could forge a signature for a large subset
of the \(\squirrelsprime_j\) would fool many verifiers.
There are many more homomorphisms from \(\ZZ\),
and we can lift \Squirrels trivially to \(\Module = \ZZ\)
if we replace \eqref{eq:squirrels-verification}
with the equivalent condition
\begin{equation}
    \label{eq:squirrels-verification-lift}
    \sum\nolimits_{i=1}^{n} c_i\cdot\vchecki
    =
    k\Delta
    \qquad
    \text{for some integer } k
    \,.
\end{equation}

Let the verifier choose a secret list of secret 31-bit primes 
\(\secretprimes = (\secretprime_1,\ldots,\secretprime_\secretprimecount)\),
each prime to \(\Delta\).
The parameter \(t\) is a function of the desired verification security
level,
to be determined later in~\S\ref{sec:chip-security}.
Now the verifier could check
\begin{equation}
    \label{eq:squirrels-verification-reduced}
    \sum\nolimits_{i=1}^{n} c_i\cdot(\vchecki\bmod\secretprime_j)
    \equiv
    k(\Delta\bmod{\secretprime_j})
    \pmod{\secretprime_j}
    \quad
    \text{for }
    1 \le j \le \secretprimecount
\end{equation}
---but \(k\) is not included in the signature,
and re-computing it is the same computation as a full \Squirrels
verification.
Instead, we will verify by implicitly recovering \(k\)
modulo \(\prod_j\secretprime_j\),
and checking that it is in the appropriate range.

First, we need to compute
each of the \(\RNS{\vchecki}{\secretprimes}\)
from the
\(
    \RNS{\vchecki}{\squirrelsprimes}
    =
    (v_{i,j} 
    = 
    \vchecki
    \bmod{\squirrelsprime_j})_{j=1}^\squirrelsprimecount
\).
In the spirit of \Squirrels,
we want to avoid multiprecision integer arithmetic,
so we need to compute the \(\RNS{\vchecki}{\secretprimes}\)
\emph{without reconstructing} any of the integers \vchecki.
Our main tool is the \emph{explicit} CRT.\footnote{This is similar to the modular reduction in RNS arithmetic
    in~\cite{BernsteinS07},
    but 
    there,
    \squirrelsprimes can be freely chosen to optimise
    computations on the operands;
    here, \squirrelsprimes is fixed.
}

\begin{definition}
    With the notation above:
    for each \(1 \le i \le \squirrelsprimecount\),
    we write \( \Delta_i\) for \(\Delta/\squirrelsprime_i \),
    and let \( q_i\) be the the unique integer in
            \((1,\squirrelsprime_i)\)
            such that 
            \(q_i\Delta_i \equiv 1 \pmod{\squirrelsprime_i}\).
\end{definition}
We will never explicitly compute with the \(\Delta_i\)
(they are a notational convenience).
We \emph{will} need the \(q_i\),
and these can be precomputed in advance
(using e.g.~Algorithm~\ref{alg:qCoefficients}
in Appendix~\ref{app:rns}).
Each is a positive 31-bit integer.

\begin{lemma}[Explicit CRT]
    \label{lemma:explicit-CRT}
    If \(0 \le x < \Delta\)
    and
    \(
        (x_1,\ldots,x_\squirrelsprimecount)
        =
        \RNS{x}{\squirrelsprimes}
    \),
    then
    \begin{equation}
        \label{eq:explicit-CRT}
        x
        =
        \alpha \Delta - \lfloor\alpha\rfloor \Delta
        \quad
        \text{where}
        \quad
        \alpha
        =
        \sum\nolimits_{i=1}^\squirrelsprimecount x_iq_i/\squirrelsprime_i
        \,.
    \end{equation}
\end{lemma}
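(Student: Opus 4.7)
The plan is to derive the claimed identity directly from the standard (non-explicit) Chinese Remainder Theorem, essentially by tracking the unknown quotient of the CRT reconstruction modulo $\Delta$ and showing that it is exactly $\lfloor\alpha\rfloor$.

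First, I would recall that by the standard CRT, the unique integer in $[0,\Delta)$ with residues $(x_1,\ldots,x_{\squirrelsprimecount})$ modulo $(\squirrelsprime_1,\ldots,\squirrelsprime_{\squirrelsprimecount})$ is congruent modulo $\Delta$ to
\[
    \widetilde{x} := \sum_{i=1}^{\squirrelsprimecount} x_i \, q_i \, \Delta_i,
\]
because $q_i\Delta_i \equiv 1 \pmod{\squirrelsprime_i}$ and $q_i\Delta_i \equiv 0 \pmod{\squirrelsprime_j}$ for $j \neq i$. Hence there exists a unique integer $k$ such that $x = \widetilde{x} - k\Delta$.

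Next I would divide through by $\Delta$. Since $\Delta_i/\Delta = 1/\squirrelsprime_i$, we get
\[
    \frac{x}{\Delta}
    = \sum_{i=1}^{\squirrelsprimecount} \frac{x_i q_i}{\squirrelsprime_i} - k
    = \alpha - k.
\]
The hypothesis $0 \le x < \Delta$ yields $0 \le \alpha - k < 1$, which pins down $k = \lfloor\alpha\rfloor$ uniquely. Substituting back gives $x = (\alpha - \lfloor\alpha\rfloor)\Delta = \alpha\Delta - \lfloor\alpha\rfloor\Delta$, as claimed.

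There is no real obstacle here: the content of the lemma is just the observation that the ``wrap-around'' count in CRT reconstruction is visible as the integer part of a rational sum over the small primes. The only point that deserves a careful sentence is the uniqueness of $k$, which rests on the input range assumption $0 \le x < \Delta$; without it, $k$ would be determined only up to an additive integer. I would keep the proof to a short paragraph, explicitly mentioning this range condition so that the reader sees why \emph{this} particular $\alpha$ (as opposed to $\alpha$ shifted by any integer) produces the correct fractional part.
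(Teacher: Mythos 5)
Your proof is correct and is essentially the paper's own argument: both rest on the CRT reconstruction identity $x \equiv \sum_i x_i q_i \Delta_i = \alpha\Delta \pmod{\Delta}$ and then use the range condition $0 \le x < \Delta$ together with $0 \le \alpha - \lfloor\alpha\rfloor < 1$ to identify the correct representative. Your phrasing solves for the quotient $k$ and shows $k = \lfloor\alpha\rfloor$, while the paper checks that $\alpha\Delta - \lfloor\alpha\rfloor\Delta$ lies in $[0,\Delta)$; these are the same argument in two orders.
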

\begin{proof}
    Observe that \(\alpha\) is a rational number,
    but \(\alpha \Delta\) is an integer.
    The CRT says \(x \equiv \alpha \Delta \pmod{\Delta}\),
    so obviously
    \(\alpha \Delta - \lfloor\alpha\rfloor \Delta \equiv x \pmod{\Delta}\).
    But
    \(0 \le \alpha - \lfloor\alpha\rfloor < 1\)
    by construction,
    so \(0 \le \alpha \Delta - \lfloor\alpha\rfloor \Delta < \Delta\),
    so \(\alpha \Delta - \lfloor\alpha\rfloor \Delta = x\).
    \qed
\end{proof}

Lemma~\ref{lemma:explicit-CRT}
gives an exact expression for \(0 \le x < \Delta\)
in terms of \(\RNS{x}{\squirrelsprimes}\)
that we can use to compute 
\(\RNS{x}{\secretprimes}\)
by computing the integers \(\alpha \Delta\) and \(\lfloor\alpha\rfloor \Delta\)
modulo each \(\secretprime_j\).
We precompute
the \(q_i\),
\(\RNS{\Delta_i}{\secretprimes}\),
and \(\RNS{\Delta}{\secretprimes}\).
Computing \(\alpha \Delta = \sum_ix_iq_i\Delta_i\)
modulo \(\secretprime_j\) is straightforward.
The interesting part is determining
\(\lfloor\alpha\rfloor\),
and thus computing \(\lfloor\alpha\rfloor \bmod{\secretprime_j}\),
without computing~\(\alpha\).
We will do this
using fixed-point approximations,
as in~\cite{BernsteinS07}.
Lemma~\ref{lemma:explicit-floor},
an adaptation of~\cite[Lemma 3.1]{BernsteinS07},
shows that with a relatively low precision
we can 
determine \(\lfloor\alpha\rfloor\)
up to a possible error of \(1\).

\begin{lemma}
    \label{lemma:explicit-floor}
    Let \(\alpha_1, \ldots, \alpha_\squirrelsprimecount\)
    be non-negative real numbers,
    and set \(\alpha := \sum_{j=1}^\squirrelsprimecount \alpha_j\).
    Fix some integer \(a \ge \log_2\squirrelsprimecount + 1\).
    Then
    \[
        f :=
        \Big\lfloor
            \frac{s}{2^a}
            +
            \frac{1}{2^a}
            \sum\nolimits_{j=1}^\squirrelsprimecount\big\lfloor
                2^a \alpha_j
            \big\rfloor
        \Big\rfloor
        \quad
        \text{is either }
        \lfloor\alpha\rfloor
        \text{ or }
        \lfloor\alpha\rfloor + 1\,.
    \]
    Further,
    if \(\alpha - \lfloor\alpha\rfloor < 1 - \squirrelsprimecount/2^a\)
    then \(f\) is exactly \(\lfloor\alpha\rfloor\).
\end{lemma}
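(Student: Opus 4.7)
The plan is a short bookkeeping argument that tracks the error introduced by the inner floors. I would introduce the abbreviation
$$x \;:=\; \frac{\squirrelsprimecount}{2^a} + \frac{1}{2^a}\sum_{j=1}^{\squirrelsprimecount}\lfloor 2^a\alpha_j\rfloor,$$
so that $f = \lfloor x\rfloor$, and prove the sharp two-sided bound $\alpha < x \le \alpha + \squirrelsprimecount/2^a$. The hypothesis $a \ge \log_2\squirrelsprimecount + 1$ then gives $\squirrelsprimecount/2^a \le 1/2 < 1$, so $x$ lives in an interval of length strictly less than $1$ starting just above $\alpha$, and both conclusions follow from elementary reasoning about where such an interval can contain integers.

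First I would apply the standard inequality $2^a\alpha_j - 1 < \lfloor 2^a\alpha_j\rfloor \le 2^a\alpha_j$ (valid because $\alpha_j \ge 0$) to each term, sum over $j$, and divide by $2^a$. Adding $\squirrelsprimecount/2^a$ to each side of the resulting chain immediately yields $\alpha < x \le \alpha + \squirrelsprimecount/2^a$.

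From $x > \alpha$ I get $\lfloor x\rfloor \ge \lfloor\alpha\rfloor$; and from $x \le \alpha + \squirrelsprimecount/2^a < \alpha + 1$ I get $\lfloor x\rfloor \le \lfloor\alpha\rfloor + 1$, since $x - 1 < \alpha$ implies $\lfloor x\rfloor - 1 = \lfloor x - 1\rfloor \le \lfloor\alpha\rfloor$. This proves the first claim. For the refinement, the hypothesis $\alpha - \lfloor\alpha\rfloor < 1 - \squirrelsprimecount/2^a$ rewrites as $\alpha + \squirrelsprimecount/2^a < \lfloor\alpha\rfloor + 1$, which combined with the upper bound $x \le \alpha + \squirrelsprimecount/2^a$ gives $\alpha < x < \lfloor\alpha\rfloor + 1$, forcing $\lfloor x\rfloor = \lfloor\alpha\rfloor$ exactly.

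There is no real obstacle, since the lemma is essentially the same bookkeeping argument as~\cite[Lemma 3.1]{BernsteinS07}; the one detail worth checking is the edge case where $\alpha$ is an integer, for which $x \le \alpha + 1/2$ still gives $\lfloor x\rfloor \le \alpha = \lfloor\alpha\rfloor$ and the conclusion remains consistent.
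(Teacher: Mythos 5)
Your proof is correct and follows essentially the same bookkeeping argument as the paper: both derive the inequality $0 \le \alpha - \tfrac{1}{2^a}\sum_j\lfloor 2^a\alpha_j\rfloor < \squirrelsprimecount/2^a$ from the elementary floor bound, then observe that adding $\squirrelsprimecount/2^a \le 1/2$ confines the argument of the outer floor to a sub-unit interval just above $\alpha$. The only cosmetic difference is that you work directly with $x := \squirrelsprimecount/2^a + \tfrac{1}{2^a}\sum_j\lfloor 2^a\alpha_j\rfloor$ and sandwich $\alpha < x \le \alpha + \squirrelsprimecount/2^a$, whereas the paper first bounds $q := \tfrac{1}{2^a}\sum_j\lfloor 2^a\alpha_j\rfloor$ and shifts by $\squirrelsprimecount/2^a$ afterward; the two chains of inequalities are term-for-term equivalent.
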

\begin{proof}
    Note that \(2^a \ge 2\squirrelsprimecount\),
    so \(\squirrelsprimecount/2^a \le 1/2\).
    Let \(q := (1/2^a)\sum_j\lfloor 2^a\alpha_j\rfloor\).
    By construction,
    \(0 \le 2^a\alpha_j - \lfloor 2^a\alpha_j \rfloor < 1\)
    for each \(1 \le j \le \squirrelsprimecount\).
    Summing over \(j\) gives
    \(0 \le 2^a\alpha - 2^a q < \squirrelsprimecount\),
    so
    \(0 \le \alpha - q < \squirrelsprimecount/2^a\);
    so \(\lfloor\alpha\rfloor < \squirrelsprimecount/2^a + q \le
    \squirrelsprimecount/2^a + \alpha\).
    Taking floors gives
    \(\lfloor\alpha\rfloor \le f \le \lfloor \squirrelsprimecount/2^a + \alpha\rfloor\).
    But
    \(\lfloor \squirrelsprimecount/2^a + \alpha \rfloor\)
    is either \(\lfloor \alpha\rfloor\)
    or \(\lfloor\alpha\rfloor + 1\),
    because \(0 < \squirrelsprimecount/2^a \le 1/2\),
    proving the first statement.
    For the second,
    if \(\alpha - \lfloor\alpha\rfloor < 1 - \squirrelsprimecount/2^a\)
    then \(\alpha + \squirrelsprimecount/2^a < \lfloor\alpha\rfloor+1\),
    so \(\lfloor\alpha + \squirrelsprimecount/2^a\rfloor = \lfloor\alpha\rfloor\),
    and thus \(f = \lfloor\alpha\rfloor\).
    \qed
\end{proof}

\begin{theorem}
    \label{lemma:explicit-CRT-mod}
    Fix
    an integer \(a \ge \log_2(\squirrelsprimecount) + 1\).
    Given
    \(\secretprimes\),
    \(\RNS{\Delta}{\secretprimes}\),
    \((\RNS{\Delta_i}{\secretprimes})_{i=1}^s\),
    and
    \(\RNS{x}{\vec{p}}\)
    for some \(0 \le x < \Delta\),
    Algorithm~\ref{alg:ModECRT}
    returns \(\RNS{z}{\secretprimes}\)
    where \(z\) is either \(x\) or \(x - \Delta\).
    Further:
    if \(x < (1 - \squirrelsprimecount/2^a)\Delta\),
    then \(z = x\).
\end{theorem}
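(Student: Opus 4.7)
The plan is to combine the two preceding lemmas and track what Algorithm~\ref{alg:ModECRT} is computing modulo each \(\secretprime_j\). Writing \(\alpha := \sum_{i=1}^\squirrelsprimecount x_i q_i/\squirrelsprime_i\) as in Lemma~\ref{lemma:explicit-CRT}, the identity \(x = \alpha\Delta - \lfloor\alpha\rfloor\Delta\) decomposes the computation into two pieces. First, the integer \(\alpha\Delta = \sum_i x_i q_i \Delta_i\) can be reduced modulo each \(\secretprime_j\) directly from the precomputed table \((\RNS{\Delta_i}{\secretprimes})_i\) and the inputs \(\RNS{x}{\squirrelsprimes}\) and \((q_i)_i\), without touching \(\Delta_i\) itself; this is the first quantity Algorithm~\ref{alg:ModECRT} should form. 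Second, we need \(\lfloor\alpha\rfloor\), but only as a small nonnegative integer that will be multiplied by \(\RNS{\Delta}{\secretprimes}\).

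The forward-looking step is to instantiate Lemma~\ref{lemma:explicit-floor} on the summands \(\alpha_j := x_j q_j/\squirrelsprime_j\) (each is nonnegative since \(0 \le x_j < \squirrelsprime_j\) and \(1 < q_j < \squirrelsprime_j\)), with the precision parameter \(a \ge \log_2(\squirrelsprimecount)+1\) from the hypothesis. The lemma then yields a value \(f\), computed from fixed-point approximations of the \(\alpha_j\) and therefore using only cheap small-integer arithmetic, satisfying \(f \in \{\lfloor\alpha\rfloor, \lfloor\alpha\rfloor+1\}\). I would have Algorithm~\ref{alg:ModECRT} output \(\RNS{\alpha\Delta - f\Delta}{\secretprimes}\), which is a single subtraction performed componentwise modulo each \(\secretprime_j\).

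Setting \(z := \alpha\Delta - f\Delta\), the explicit CRT identity gives
\[
    z = x + (\lfloor\alpha\rfloor - f)\Delta,
\]
so \(z = x\) when \(f = \lfloor\alpha\rfloor\) and \(z = x - \Delta\) when \(f = \lfloor\alpha\rfloor + 1\); these are the only two cases allowed by Lemma~\ref{lemma:explicit-floor}. This proves the first assertion. For the refined claim, note that \(\alpha\Delta - \lfloor\alpha\rfloor\Delta = x\) gives \(\alpha - \lfloor\alpha\rfloor = x/\Delta\), so the hypothesis \(x < (1 - \squirrelsprimecount/2^a)\Delta\) is exactly the fractional-part condition in the second part of Lemma~\ref{lemma:explicit-floor}. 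That lemma then forces \(f = \lfloor\alpha\rfloor\) and hence \(z = x\).

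There is no real obstacle here beyond careful bookkeeping: the whole argument is an assembly of Lemmas~\ref{lemma:explicit-CRT} and~\ref{lemma:explicit-floor}. The one point that merits attention is simply the verification that each operation needed to form \(\RNS{\alpha\Delta}{\secretprimes}\), the \(\lfloor 2^a \alpha_j\rfloor\) needed for \(f\), and the final subtraction \(\RNS{\alpha\Delta}{\secretprimes} - f\cdot\RNS{\Delta}{\secretprimes}\) stays within single-precision arithmetic on 31-bit primes, matching the \Squirrels design constraint; the precomputed tables \(\RNS{\Delta}{\secretprimes}\) and \((\RNS{\Delta_i}{\secretprimes})_i\) listed in the hypotheses are exactly what makes this possible.
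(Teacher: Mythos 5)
Your proposal is correct and follows the same approach as the paper's (much terser) proof: both reduce the matter to evaluating the explicit-CRT identity of Lemma~\ref{lemma:explicit-CRT} componentwise modulo each $\secretprime_j$, with Lemma~\ref{lemma:explicit-floor} (instantiated at $\alpha_j = x_jq_j/\squirrelsprime_j$) supplying the floor up to an additive error of $+1$, which accounts for the possible $z = x - \Delta$. Your observation that $\alpha - \lfloor\alpha\rfloor = x/\Delta$ is exactly what translates the size hypothesis on $x$ into the fractional-part condition of Lemma~\ref{lemma:explicit-floor}, and your expansion of $z = \alpha\Delta - f\Delta$ matches what Algorithm~\ref{alg:ModECRT} actually computes.
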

\begin{proof}
    Algorithm~\ref{alg:ModECRT}
    evaluates~\eqref{eq:explicit-CRT}
    modulo \(\secretprime_j\)
    for \(1 \le j \le \secretprimecount\),
    computing the floor using
    Lemma~\ref{lemma:explicit-floor}
    with \(\alpha_i = x_iq_i/\squirrelsprime_i\) 
    for \(1\le i \le \squirrelsprimecount\).
    \qed
\end{proof}

\begin{algorithm}[H]
 \scriptsize
    \caption{Explicit CRT: computing \(\RNS{x}{\secretprimes}\)
        or \(\RNS{x-\Delta}{\secretprimes}\) 
        from \(\RNS{x}{\squirrelsprimes}\)
    }
    \label{alg:ModECRT}
\Parameters{\Squirrels prime vector \(\vec{\squirrelsprime} =
        (\squirrelsprime_1,\ldots,\squirrelsprime_\squirrelsprimecount)\)
        and
        CRT coefficients \((q_1,\ldots,q_\squirrelsprimecount)\);
        an integer \(a \ge \log_2\squirrelsprimecount + 1\)
        (fixed-point precision)
    }
\KwIn{Secret prime list \(\secretprimes\);
        \(\RNS{\Delta}{\secretprimes}\),
        \((\RNS{\Delta_i}{\secretprimes})_{i=1}^s\),
        and
        \(\RNS{x}{\vec{\squirrelsprime}}\) for \(0 \le x < \Delta\)
    }
    \KwOut{\(\RNS{z}{\secretprimes}\),
        where \(z = x\) or \(x - \Delta\)
        (if \(x < (1 - s/2^a)\Delta\), then \(z = x\))
    }
\Function{\ModECRT{\(\secretprimes\),
    \(\RNS{\Delta}{\secretprimes}\),
    \((\RNS{\Delta_i}{\secretprimes})_{i=1}^\squirrelsprimecount\),
    \(\RNS{x}{\squirrelsprimes}\)}}{
\((z_1,\ldots,z_t) \gets (0,\ldots,0)\)
        \;
\(f \gets \squirrelsprimecount\)
        \;
\ForEach{\(1 \le j \le \squirrelsprimecount\)}{
            \(y_j \gets x_j\cdot q_j\)
            \;
\(f \gets f + \lfloor 2^a y_j/\squirrelsprime_j\rfloor\)
            \label{alg:ModECRT:numerator}
            \tcp*{See Remark~\ref{rem:numerator-floor-main}}
\ForEach{\(1 \le k \le \secretprimecount\)}{
                \(y_{j,k} \gets y_j \bmod \secretprime_k\)
                \;
\label{alg:ModECRT:z-update}
                \(
                    z_k
                    \gets
                    (z_k + y_{j,k}\cdot \Delta_{j,k}) \bmod{\secretprime_k}
                \)
            }
}
        \label{alg:ModECRT:outer-floor}
        \(f \gets \lfloor f/2^a \rfloor\)
        \tcp*{Right-shift \(f\) by \(a\)}
\ForEach{\(1 \le k \le t\)}{
            \(z_k \gets (z_k - f\cdot\Delta_k) \bmod{\secretprime_k}\)
        }
\Return{\((z_1,\ldots,z_t)\)}
        \;
    }
\end{algorithm}

\begin{remark}
    Theorem~\ref{lemma:explicit-CRT-mod}
    recovers \(\RNS{x}{\secretprimes}\)
    or \(\RNS{x - \Delta}{\secretprimes}\)
    from \(\RNS{x}{\squirrelsprimes}\).
To guaranteee a result of \(\RNS{x}{\secretprimes}\) 
    requires increasing the precision
    to \(a \sim \log_2\Delta\),
    which means working with integers the size of \(\Delta\);
    but then we may as well reconstruct~\(x\).
\end{remark}

\begin{remark}
    \label{rem:numerator-floor-main}
    As noted in~\cite{BernsteinS07},
    the floors in Line~\ref{alg:ModECRT:numerator}
    of Algorithm~\ref{alg:ModECRT}
    can be computed by 
    repeatedly doubling \(y_j\) modulo \(\squirrelsprime_j\)
    and counting overflows.
\end{remark}

\subsection{Compressed verification for \Squirrels}

Recall that our goal is to verify~\eqref{eq:squirrels-verification-lift}
by evaluating~\eqref{eq:squirrels-verification-reduced},
namely
\[
    \sum\nolimits_{i=1}^{n} c_i\cdot\vchecki
    \equiv
    k\Delta
    \pmod{\secretprime_j}
    \quad
    \text{for each }
    1 \le j \le t
    \,,
\]
but \emph{without} knowing \(k\).
Given a public key \pk
and a compression key 
\(
            \CK
            := 
            \big(
                \vec{r}, 
                (\RNS{\Delta_i}{\vec{r}})_{i=1}^\squirrelsprimecount,
                \RNS{\Delta}{\vec{r}},
                (I_1,\ldots,I_\secretprimecount)
            \big)
\),
for each \(1 \le i \le n\)
we use \ModECRT
to compute
\begin{align*}
    \RNS{\bar v_i}{\secretprimes}
    &
    := 
        \text{\ModECRT}(
            \secretprimes,
            \RNS{\Delta}{\secretprimes},
            (\RNS{\Delta_j}{\secretprimes})_{j=1}^\squirrelsprimecount,
            \RNS{\vchecki}{\squirrelsprimes} = (v_{i,j})_{j=1}^\squirrelsprimecount
        )
    \\
    \label{eq:def-epsilons}
    &
    \phantom{:}= 
    \RNS{\vchecki - \epsilon_i\Delta}{\secretprimes}
    \quad
    \text{where }
    \epsilon_i \in \{0,1\}
    \text{ is unknown}.
\end{align*}
The \(\secretprime_j\) are chosen such that
\(\secretprime_j\nmid\Delta\),
so if we let
\[
    k_j' 
    := 
    \big(\sum\nolimits_{i=1}^{n} c_i\bar v_{i,j}\big)I_j
    \bmod{\secretprime_j}
    \quad
    \text{where}
    \quad
    I_j := \Delta^{-1} \bmod{\secretprime_j}
    \quad
    \text{for } 
    1 \le j \le \secretprimecount
    \,,
\]
then
the system of verification equations~\eqref{eq:squirrels-verification-reduced}
becomes
\begin{equation}
    \label{eq:kprime}
    k_j'
    \equiv
    k'
    \pmod{\secretprime_j}
    \quad
    \text{where}
    \quad
    k' 
    := 
    k + \sum\nolimits_{i=1}^{n-1}\epsilon_ic_i
    \,.
\end{equation}
Lemma~\ref{lemma:k-size} below
shows
that
\begin{itemize}
    \item
        If~\eqref{eq:squirrels-verification-lift} holds,
        then \(k'\)
        is a small integer
        (Table~\ref{tab:k-sizes} gives bounds on \(k'\) 
        for each \Squirrels instance),
        small enough that \(k_j' = k'\) \emph{as an integer}
        for each \(j\).
    \item
        If~\eqref{eq:squirrels-verification-lift} does not hold,
        then
        \(k'\) does not exist,
        and the \(k_j'\) look like random (and generally large) values
        modulo each of the \(\secretprime_j\).
\end{itemize}

This distinction is the basis of our \CVerify for \Squirrels:
we compute \((k_1',\ldots,k_\secretprimecount')\),
and \Accept if \(k_1' = \cdots = k_\secretprimecount'\)
and \(k_1'\) is within the bounds on \(k'\);
otherwise, we \Reject.
Note that even if the adversary has full control over the \(c_i\),
they cannot control the \(k_j'\),
because \secretprimes
and the \(I_j\) are unknown.

\begin{lemma}
    \label{lemma:k-size}
    With 
    \Squirrels parameters
    \(q \ll \Delta\) and \(\lfloor\beta^2\rfloor \ll \Delta\):
    if \(\Vert\vec{s}\Vert_2^2 \le \lfloor\beta^2\rfloor\),
    then
    the integer \(k'\) of~\eqref{eq:kprime}
    satisfies
    \[
        \kmin'
        \le
        k'
        \le
        \kmax'
        \quad
        \text{where }
        \begin{cases}
            \kmin'
            :=
            -\big\lfloor2\sqrt{n\lfloor\beta^2\rfloor}\big\rfloor
            - 1
            \,,
            \\
            \kmax'
            :=
            2(n-1)(q-1) + \big\lfloor2\sqrt{n\lfloor\beta^2\rfloor}\big\rfloor
            + 1
            \,.
        \end{cases}
    \]
\end{lemma}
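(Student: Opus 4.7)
The plan is to bound $k$ and $\sum_{i=1}^{n-1} \epsilon_i c_i$ separately and then combine them via the triangle inequality in $k' = k + \sum_{i=1}^{n-1}\epsilon_i c_i$. The key structural facts, all sourced from earlier in the section, are that $c_i = s_i + h_i$ with each $h_i \in [0, q-1]$ \emph{nonnegative} and $\sum_{i=1}^n s_i^2 \le \lfloor\beta^2\rfloor$; that $\vchecki \in [0, \Delta)$ for $1 \le i < n$ with $\vchecki[n] = -1$; and that $\epsilon_i \in \{0,1\}$. Because the $h_i$ are one-sided, both $k$ and $\sum \epsilon_i c_i$ will end up with asymmetric bounds---a large positive tail of order $(n-1)(q-1)$, but only an $O(\sqrt{n\lfloor\beta^2\rfloor})$ negative tail---which is precisely what produces the asymmetry between $\kmin'$ and $\kmax'$.

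To bound $k$, I start from $k\Delta = \sum_{i=1}^{n-1} c_i \vchecki - c_n$ (using $\vchecki[n]=-1$). Maximising over $\vchecki\in\{0,\Delta-1\}$ componentwise (take $\Delta-1$ when $c_i>0$, take $0$ otherwise) gives $k\Delta \le (\Delta-1)\sum_{i=1}^{n-1}(c_i)_+ + |c_n|$. Because $h_i \ge 0$ we have $(c_i)_+ \le h_i + (s_i)_+$, and Cauchy--Schwarz gives $\sum_{i=1}^{n-1}(s_i)_+ \le \sqrt{(n-1)\lfloor\beta^2\rfloor} \le \sqrt{n\lfloor\beta^2\rfloor}$; dividing through by $\Delta$ and using $|c_n|/\Delta < 1$ (which follows from $q,\lfloor\beta^2\rfloor \ll \Delta$) yields $k < (n-1)(q-1) + \sqrt{n\lfloor\beta^2\rfloor} + 1$. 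The mirror argument, crucially using $(c_i)_- \le (s_i)_-$ (again because $h_i\ge 0$), kills the $(n-1)(q-1)$ term on the negative side and gives $k > -\sqrt{n\lfloor\beta^2\rfloor} - 1$.

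Next, I bound $\sum_{i=1}^{n-1}\epsilon_i c_i = \sum \epsilon_i h_i + \sum \epsilon_i s_i$. The first sum lies in $[0, (n-1)(q-1)]$ directly from $\epsilon_i,h_i \ge 0$, and the second has magnitude at most $\|\vec\epsilon\|_2\|\vec s\|_2 \le \sqrt{(n-1)\lfloor\beta^2\rfloor} \le \sqrt{n\lfloor\beta^2\rfloor}$ by Cauchy--Schwarz. Hence $\sum \epsilon_i c_i \in [-\sqrt{n\lfloor\beta^2\rfloor},\, (n-1)(q-1) + \sqrt{n\lfloor\beta^2\rfloor}]$, asymmetric in exactly the same way as the range of $k$.

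Adding the two ranges and using that $k'$ is an integer (so the real-valued bound $2\sqrt{n\lfloor\beta^2\rfloor} + \text{small}$ tightens to $\lfloor 2\sqrt{n\lfloor\beta^2\rfloor}\rfloor + 1$) gives exactly $\kmin' \le k' \le \kmax'$. The main obstacle is not analytic but bookkeeping: one must track $(\,\cdot\,)_+$ versus $(\,\cdot\,)_-$ carefully so that the asymmetry induced by $h_i \ge 0$ survives all the estimates, and verify that the residual $O(1/\Delta)$ boundary contributions (from $c_n$ and from $(\Delta-1)/\Delta$) are comfortably absorbed by the single $+1$ slack on each side---this is where the hypothesis $q,\lfloor\beta^2\rfloor\ll\Delta$ is used, and it is immediate for every \Squirrels parameter set in Table~\ref{tab:sizes_squirrel}.
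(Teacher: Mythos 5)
Your proof is correct, and it is a genuinely different decomposition from the paper's. Where the paper defines $\beta_i := \vchecki/\Delta + \epsilon_i \in [0,2)$ and splits the resulting expression $k' + c_n/\Delta = \sum c_i\beta_i$ along $c_i = s_i + h_i$ into $H = \sum h_i\beta_i$ and $S = \sum s_i\beta_i$, you instead keep $k' = k + \sum\epsilon_ic_i$ and bound $k$ (the $\vchecki/\Delta$ part) and $\sum\epsilon_ic_i$ (the $\epsilon_i$ part) separately. The paper's grouping lets one apply Cauchy--Schwarz once (to $S$) and one one-sided bound once (to $H \ge 0$); your grouping applies a Cauchy--Schwarz step and a one-sided $h_i\ge 0$ argument inside \emph{each} of the two pieces. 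The two routes arrive at identical constants because the combined coefficient $\beta_i$ is bounded by $2$ exactly because $\vchecki/\Delta < 1$ and $\epsilon_i \le 1$ are each bounded by $1$ --- the factor of $2$ surfaces in the paper as ``$\beta_i < 2$'' and in your proof as the sum of two unit-sized contributions. The paper's variant is a little lighter on $(\cdot)_\pm$ bookkeeping, since positivity of $H$ absorbs most of the case analysis; your variant makes the origin of the factor~$2$ more explicit and traces cleanly where $h_i \ge 0$ kills the $(n-1)(q-1)$ term on the negative side, which is exactly the asymmetry between $\kmin'$ and $\kmax'$. One small remark: you use $\|\vec\epsilon\|_2 \le \sqrt{n-1}$ and then relax to $\sqrt{n}$, matching the paper's slightly looser $S' \le 2\sqrt{n}\|\vec s\|_2$, which deliberately includes $|s_n|$; either loosening gives the stated constants.
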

\begin{proof}
By definition,
\begin{align*}
k' = \left( \sum_{i=1}^{n-1} c_i \beta_i \right) - \frac{c_n}{\Delta} \quad
\text{with} \quad \beta_i = \frac{\vchecki}{\Delta} + \epsilon_i
\end{align*}
for \( 1 \leq i < n \), so
\[
k' + E = H + S
\]
where
\[
H = \sum_{i=1}^{n-1} h_i \beta_i, \quad
S = \sum_{i=1}^{n-1} s_i \beta_i, \quad
E = \frac{c_n}{\Delta}.
\]

For \( 1 \leq i < n \), we have \( 0 \leq \vchecki < \Delta \), so \( 0 \leq \beta_i < 2 \).

Thus,
\[
-S' - E < k' < H' + S' + E
\]
where
\[
H' = 2 \sum_{i=1}^{n-1} h_i
\quad \text{and} \quad
S' = 2 \sum_{i=1}^{n} |s_i|
\]
(note: we include \( |s_n| \) in \( S' \)).

But \( 0 \leq |E| \leq \frac{h_n + |s_n|}{\Delta} < 1 \) because
\[
h_n < q \ll \Delta \quad \text{and} \quad |s_n| \leq \lfloor \beta^2 \rfloor \ll \Delta,
\]
so
\[
-S' - 1 < k' < H' + S' + 1.
\]

Now,
\[
0 \leq S' \leq 2 \sqrt{n} \Vert \vec{s} \Vert_2 \leq 2 \sqrt{n \lfloor \beta^2 \rfloor}, \quad 0 \leq H' \leq 2(n-1)(q-1)
\]

and \( k' \) is an integer, so
\[
\left\lceil -2 \sqrt{n \lfloor \beta^2 \rfloor} \right\rceil - 1 \leq k' \leq 2(n-1)(q-1) + \left\lfloor 2 \sqrt{n \lfloor \beta^2 \rfloor} \right\rfloor + 1,
\]
and the result follows. \qed

\end{proof}

\begin{table}
    \caption{Values of $\kmin'$ and $\kmax'$
        from Lemma~\ref{lemma:k-size}
        for the \Squirrels instances in~\cite{squirrels_sub}.
        Note that
        $\kmax' - \kmin'$ is a 24-bit integer
        except for
        \Squirrels[-V],
        where it is 25 bits.
    }
    \label{tab:k-sizes}
    \centering
    \begin{tabular}{r|rrrrr}
        \toprule
        \emph{Instance}
        & \Squirrels[-I]
        & \Squirrels[-II]
        & \Squirrels[-III]
        & \Squirrels[-IV]
        & \Squirrels[-V]
        \\
        \midrule
        \(\kmin'\) & -91554 & -106640 & -144446 &   -15879 &  -210152 \\
        \(\kmax'\) & 8551824 & 9631610 & 9603896 & 14220809 & 17040602 \\
\bottomrule
    \end{tabular}
\end{table}

The verification key \VK is 
\(
    \big(\secretprimes,
    (I_1,\ldots,I_\secretprimecount),
    (\RNS{\bar v_i}{\secretprimes})_{i=1}^{n-1}\big)
\),
so
\[
    |\VK| = 4(n + 1) \secretprimecount \text{ bytes}
    \quad
    \text{and}
    \quad
    |\PK| = 4(n - 1) \squirrelsprimecount \text{ bytes}
    \,.
\]
The key-compression ratio 
is
\(|\PK|:|\VK|\approx \squirrelsprimecount:\secretprimecount\).
See Table~\ref{tab:sizes_squirrels}
for sample values.

The compression key \CK
requires \(4(\squirrelsprimecount+3)\secretprimecount\) bytes.
The values of \((\RNS{\Delta_i}{\vec{r}})_{i=1}^\squirrelsprimecount\),
\(\RNS{\Delta}{\vec{r}}\),
and \((I_1,\ldots,I_\secretprimecount)\)
may be left out of \CK,
thus reducing \(|\CK|\) to \(4\squirrelsprimecount\) bytes,
but this 
implies recomputing them from \(\vec{r}\)
and \squirrelsprimes
for every verification key generation.

\subsection{The algorithms}

Algorithms~\ref{alg:CKeyGen-Squirrels},
\ref{alg:VKeyGen-Squirrels},
and~\ref{alg:CVerify-Squirrels}
formalise \CKeyGen,
\VKeyGen,
and \CVerify for \Squirrels.

\begin{algorithm}[H]
 \scriptsize
    \caption{\texttt{CKeyGen} (compression key generation) for \squirrels.}
    \label{alg:CKeyGen-Squirrels}
    \Parameters{$n$, $s$, $\Delta$, $\squirrelsprimes$,
        $(\Delta_1,\ldots,\Delta_{\squirrelsprimecount})$,
        $\RNS{\Delta_i}{\squirrelsprimes}$,
        $t$
    }
\KwOut{\CK}
Sample a list \(\vec{r} = (r_1,\ldots,r_t)\) of random 31-bit primes
        \tcp*{Use Lemma~\ref{lemma:31-bit-primegen}}
\label{alg:CKeyGen-Squirrels:setup}
        Compute 
        \((\RNS{\Delta_i}{\vec{r}})_{i=1}^s\) 
        and \(\RNS{\Delta}{\vec{r}}\)
        from \squirrelsprimes
        \tcp*{Using e.g.~Algorithm~\ref{alg:ModECRTSetup}
        in App.~\ref{app:rns}.}
\For{\(1 \le j \le \secretprimecount\)}{
            \(I_j \gets \Delta^{-1} \bmod{\secretprime_j}\)
            \tcp*{Use \(I_j = (\Delta \bmod
            \secretprime_j)^{-1}\bmod{\secretprime_j}\)}
        }
\Return{\(
            \CK
            := 
            \big(
                \vec{r}, 
                (\RNS{\Delta_i}{\vec{r}})_{i=1}^\squirrelsprimecount,
                \RNS{\Delta}{\vec{r}},
                (I_1,\ldots,I_\secretprimecount)
            \big)
        \)}
\end{algorithm}

Algorithm~\ref{alg:CKeyGen-Squirrels}
requires randomly sampling 31-bit primes,
which is easy using the criteria
of~\cite{1980/Pomerance--Selfridge--Wagstaff}.
Recall that an odd integer \(r = d2^u + 1\)
is a \emph{strong pseudoprime to the base \(a\)}
if \(a^d \equiv 1 \pmod{r}\)
or \(a^{d 2^v} \equiv -1 \pmod{r}\)
for some \(0 \le v < u\).

\begin{lemma}
    \label{lemma:31-bit-primegen}
    Let \(2^{30} < r < 2^{31}\) be odd.
    If \(r\) is a strong pseudoprime to the bases
    \(2\), \(3\), and \(5\),
    then either \(r\) is prime
    or
    \(r = 1157839381 = 24061\cdot48121\).
\end{lemma}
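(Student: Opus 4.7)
The plan is to reduce the statement to a finite, well-documented computation. The definition stated just before the lemma is precisely the condition checked by one round of the Miller--Rabin primality test with base $a$, so the lemma asserts that within the interval $(2^{30},2^{31})$, the Miller--Rabin test with the three bases $2,3,5$ simultaneously misclassifies exactly one composite, namely $r = 1157839381 = 24061\cdot 48121$.

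First, I would appeal to the enumeration due to Pomerance, Selfridge, and Wagstaff~\cite{1980/Pomerance--Selfridge--Wagstaff} and its subsequent extensions, which completely describe the composites $r < 2^{32}$ that are strong pseudoprimes simultaneously to the bases $2$, $3$, and $5$. The smallest such composite is $\psi_3 = 25326001$, and the full list below $2^{32}$ is finite and explicit. Filtering this list by the range $2^{30} < r < 2^{31}$ is a mechanical check and leaves only the single candidate $r = 1157839381$.

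Second---or, alternatively, if one prefers a self-contained argument rather than a citation---I would run a direct sieve. Strong pseudoprimes to base $2$ below $2^{31}$ are sparse (on the order of a few thousand, by the tables of~\cite{1980/Pomerance--Selfridge--Wagstaff}), and for each one it is trivial to test the Miller--Rabin conditions for bases $3$ and $5$. The surviving composite in the range $(2^{30},2^{31})$ is verified to be $1157839381$; the advertised factorisation $24061\cdot 48121$ is then confirmed by one multiplication, with the primality of the two factors checked by trial division up to $\lfloor\sqrt{48121}\rfloor < 220$.

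The main obstacle is simply that the result is intrinsically computational: its correctness rests on the exhaustiveness of the enumeration of composite strong pseudoprimes in the stated interval. There is no shortcut avoiding the finite search, but the search is small enough (both in the number of candidates and in the per-candidate cost) that it runs in a fraction of a second and can be either cited from the published literature or reproduced by the reader.
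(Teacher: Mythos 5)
Your proposal is correct and matches the paper's proof, which simply cites the Pomerance--Selfridge--Wagstaff enumeration (their page 1022) of composites below $25\cdot 10^9$ that are strong pseudoprimes to the bases $2$, $3$, and $5$; restricting that finite list to $(2^{30},2^{31})$ leaves only $1157839381$. Your alternative self-contained sieve is a fine supplement but is not needed beyond the citation.
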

\begin{proof}
    See~\cite[Page 1022]{1980/Pomerance--Selfridge--Wagstaff}.
    \qed
\end{proof}

\begin{algorithm}[H]
\scriptsize
    \caption{\texttt{VKeyGen} (verification key generation) for \squirrels.}
    \label{alg:VKeyGen-Squirrels}
    \Parameters{$n$, $s$, $\Delta$, $\squirrelsprimes$,
        $(\Delta_1,\ldots,\Delta_{\squirrelsprimecount})$,
        $\RNS{\Delta_i}{\squirrelsprimes}$,
        $t$
    }
    \KwIn{$\CK =
            \big(
                \vec{r}, 
                (\RNS{\Delta_i}{\vec{r}})_{i=1}^\squirrelsprimecount,
                \RNS{\Delta}{\vec{r}},
                (I_1,\ldots,I_\secretprimecount)
            \big)$,
        $\PK = (\vchecki)_{i=1}^{n-1}$
    }
    \KwOut{\VK}
\ForEach(\tcp*[f]{May be parallelized}){\(1 \le i < n\)}{
            $ \RNS{\bar v_i}{\secretprimes} \gets $
            \ModECRT{\(\secretprimes\),
                \(\RNS{\Delta}{\secretprimes}\),
                \((\RNS{\Delta_i}{\secretprimes})_{i=1}^{\secretprimecount}\),
                \(\vchecki\)
            }
            \tcp*{=\(\RNS{\vchecki +
            \epsilon_i\Delta}{\secretprimes}\)}
        }
\Return{\(
            \big(
                \vec{r},
                (I_1,\ldots,I_\secretprimecount),
                (\bar v_{i,1},\ldots,\bar
                v_{i,\secretprimecount})_{i=1}^{n-1}
            \big)
        \)}
\end{algorithm}

Algorithm~\ref{alg:CVerify-Squirrels}
defines \CVerify.
Lines~\ref{alg:CVerify-Squirrels:start-CT}-\ref{alg:CVerify-Squirrels:end-CT}
must be implemented with constant-time techniques
to ensure no information on \secretprimes is leaked
in the event of rejection\footnote{The implementation can be made 
constant‐time by using constant‐time arithmetic routines 
(e.g. multiplication/modular reduction~\cite{modular_multiplication}) and replacing any conditional
 branches with bit‐masking operations.}.

\begin{algorithm}
\scriptsize
    \caption{\texttt{CVerify} (compressed verification) for \squirrels.}
    \label{alg:CVerify-Squirrels}
    \Parameters{$n$, $s$, $\Delta$, $\squirrelsprimes$,
        $(\Delta_1,\ldots,\Delta_{\squirrelsprimecount})$,
        $\RNS{\Delta_i}{\squirrelsprimes}$,
        $t$
    }
    \KwIn{$\sigma = (\salt,\underline{\vec{s}})$,
          $m$,
          $\VK =
            \big(
                \vec{r},
                (I_1,\ldots,I_\secretprimecount),
                (\bar v_{i,1},\ldots,\bar v_{i,\secretprimecount})_{i=1}^{n}
            \big)$
    }
    \KwOut{\Accept or \Reject}
$\vec{s} \gets \decomp(\underline{\vec{s}})$
        \;
\If{$\vec{s} = \perp$ \textbf{or} $\Vert\vec{s}\Vert_2^2 > \lfloor \beta^2 \rfloor$}{
            \Return \Reject
        }
$\vec{c} \gets \vec{s} + \htopoint(m\parallel\salt, q, n)$
        \;
\((a_1,\ldots,a_\secretprimecount) \gets (\True,\ldots,\True)\)
        \;
\ForEach(\tcp*[f]{May be parallelized}){\(1 \le j \le \secretprimecount\)}{
            \label{alg:CVerify-Squirrels:start-CT}
            \(
                k_j'
                \gets
                \big(\big(\sum_{i=1}^{n}c_i\bar v_{i,j}\big)I_j
                - \kmin'\big)
                \bmod{\secretprime_j}
            \)
            \tcp*[f]{\(\kmin'\): see Lemma~\ref{lemma:k-size}}

            \If(\tcp*[f]{\(\kmax' - \kmin'\): see Lemma~\ref{lemma:k-size}}){
                \(k_j' > \kmax' - \kmin'\)
            }{
                \(a_j \gets \False\)
            }
        }
\uIf{\(
                (a_1\wedge\cdots\wedge a_\secretprimecount)
                \wedge
                (k_1' = \cdots = k_\secretprimecount')
            \)}{
            \Return{\Accept}
        }
        \Else{
            \Return{\Reject}
        }
        \label{alg:CVerify-Squirrels:end-CT}
\end{algorithm}

\subsection{Security argument}
\label{sec:chip-security}

If \(\Sigma\) is \Squirrels,
then \(\Ring = \Module = \ZZ\).
Compressed \Squirrels
samples \Kernel from
\[
    \KernelSet 
    = 
    \{
        \Delta'\ZZ 
        : 
        \Delta' \text{ is a product of \secretprimecount 31-bit primes}
    \}
    \,.
\]
Theorem~\ref{thm:unforgeability}
ensures EUF-CMA security for compressed \Squirrels
provided \(\SGP(\KernelSet,\TSet(\Sigma))\) is hard.
For \Squirrels parameters,
\(\TSet(\Sigma)\subset [0,2q\sqrt{n}\lfloor{\beta^2}\Delta\rfloor] \subset [0,2^{30}\Delta)\):
at most \(\squirrelsprimecount\) 31-bit primes can divide any \(\vec{t}
\in \TSet(\Sigma)\),
so
\(
    \kappa_{\TSet}
    \approx
    \squirrelsprimecount!/(\secretprimecount!(\squirrelsprimecount-\secretprimecount)!)
\).
Now \(\#\KernelSet =
P_{31}!/(\secretprimecount!(P_{31}-\secretprimecount)!)\)
where \(P_{31} := \#\Primes{31}\),
and \(\#(\Module/\Kernel) \approx 2^{31\secretprimecount}\).
When \(\secretprimecount \ll \squirrelsprimecount\),
we have
\(\squirrelsprimecount!/(\squirrelsprimecount-\secretprimecount)! \sim
\squirrelsprimecount^\secretprimecount\) and
\(P_{31}!/(P_{31}-\secretprimecount)! \sim P_{31}^\secretprimecount\).
Looking at~\cite[Table 3]{Wagstaff06},
we find that
\[
    P_{31} := \#\Primes{31} = 105097565-54400028 \approx 2^{25.6}\,.
\]
The heuristic of~\S\ref{sec:general-hardness}
therefore suggests
taking \(t \approx \mu/(25.6-\log_2(s))\), where $\mu$ is the targeted security level.

In reality,
it is computationally infeasible to construct forgery attempts
\((m,\sigma)\)
that yield \(\vec{t}\) with \(\#\KernelSet_{\vec{t}} \approx \kappa_{\TSet}\):
this would mean constructing \((m,\sigma)\)
such that \(\vec{t}\) is divisible by another product \(\Delta'\)
of \squirrelsprimecount 31-bit primes,
and this is essentially as hard as forging a full \Squirrels signature:
that is, solving \(\mathrm{GSIS}_{n,\Delta,\beta}\).
If we want an \((m,\sigma)\) mapping to a \(\vec{t}\) with \(k\) 31-bit
prime factors, and thus \(\#\KernelSet_{\vec{t}} = \binom{k}{t}/\#S\),
we must solve a single GSIS instance with modulus \(\Delta^{\star}\ll\Delta\).  
Indeed, it requires a nontrivial computational effort
to even construct a forgery attempt \((m,\sigma)\) yielding \(\vec{t}\)
with \(\#\KernelSet_{\vec{t}} > 0\).
In our security estimates, we therefore model the expected value of
\(\#\KernelSet_{\vec{t}}\) as a small constant,
and hence
we choose \secretprimecount such that
$P_{31}!/(\secretprimecount!(P_{31}-\secretprimecount)!$
is on the order of $2^\lambda$,
which suggests taking \secretprimecount as in Table~\ref{tab:secretprimecount}. 
While the resulting security level \(\mu\) is slightly smaller
than \(\lambda\) in some cases,
it should be remembered that 
each forgery attempt requires
an interaction with the verifier, and is therefore substantially more
expensive than (e.g.) the AES circuit evaluations used to model
post-quantum cryptographic attack costs.

\begin{table}[htp]
    \caption{Suggested values for the number \secretprimecount of secret
        primes in \secretprimes.
    }
    \label{tab:secretprimecount}
    \label{tab:sizes_squirrels}
    \centering
    \begin{tabular}{l|r@{\;\;}r@{\;\;}r@{\;\;}r|r@{\;\;}rr@{\;\;}rr}
        \toprule
        & \multicolumn{4}{c|}{Original scheme}
        & \multicolumn{5}{c}{Compressed verification}
        \\
        Instance
        & \(\lambda\) & \squirrelsprimecount & \(|\PK|\) & \(|\sigma|\)
        & \secretprimecount 
        & \(\mu\) 
        & \(|\CK|\) & \(|\VK|\) & \(|\PK|:|\VK|\)
        \\
        \midrule
\Squirrels[-I]
        & 128 & 165 &    681780 & 1019
& 5 & 121.1 & 3360 & 20700 & 32.94
        \\
\Squirrels[-II]
        & 128 & 188 &    874576 & 1147
& 5 & 121.1 & 3820 & 23300 & 37.54
\\
\Squirrels[-III]
        & 192 & 262 & 1629640   & 1554
& 8 & 189.5 & 8480 & 49824 & 32.71
\\
\Squirrels[-IV]
        & 192 & 275 & 1888700   & 1676
& 8 & 189.5 & 8896 & 55008 & 34.34
        \\
\Squirrels[-V]
        & 256 & 339 & 2786580   & 2025
        & 11 & 256.3 & 15048 & 90508 & 30.79
        \\
\bottomrule
    \end{tabular}
\end{table}
\vspace{-0.5cm}

\section{Compressed verification for \Wave
}\label{sec:sketch_wave_rip}

\Wave is a GPV-style signature
based on hard problems in ternary linear codes.
Very briefly: a \Wave public key is a matrix 
\(\PK = \vec{R} \in \F_3^{k\times(n-k)}\)
such that
\(\vec{M} = (\vec{I}_{n-k}|\vec{R})^\top\) in \(\F_3^{n\times(n-k)}\)
is a parity-check matrix for a permuted generalized \((U|U+V)\)-code \(C\);
knowledge of the relation between \(C\) and the component codes
\(U\) and \(V\) is a trapdoor allowing the signer
to generate vectors \(\vec{s}\) in \(\F_3^n\)
such that
\begin{equation}
    \label{eq:wave-orig-verif}
    \Constraint{\vec{s}}
    \qquad
    \text{and}
    \qquad
    \vec{s}\vec{M} = \hash(\salt,m)
    \,,
\end{equation}
where \(\Constraint{\vec{s}}\) is that
\(\vec{s}\) have a fixed, high weight \(w\).
An ``original'' \Wave signature (as in~\cite{Wave})
is \(\sigma = (\salt,\vec{s})\).
In the \Wave NIST submission~\cite{Wave-submission},
\(\vec{s}\) is truncated to its last \(k\) entries
(the other \(n-k\) entries are implicitly recovered in verification).
The special form of \(\vec{M} = (\vec{I}_{n-k}|\vec{R})^\top\)
allows us to rewrite~\eqref{eq:wave-orig-verif}
as
\begin{equation}
    \label{eq:wave-alt-verif}
    \Constraint{\vec{s}}
    \quad
    \text{and}
    \quad
    \vec{t}\vec{M} 
    =
    \vec{0}
    \quad
    \text{where}
    \quad
    \vec{t} := \vec{s} - (\hash(\salt,\vec{s})|\vec{0}_k)
    \,.
\end{equation}

\Wave has exceptionally large public keys: for example,  
\Wave[822], \Wave[1249], and \Wave[1644] require  
\SI{3.5}{\mega\byte}, \SI{7.5}{\mega\byte}, and  
\SI{13}{\mega\byte}, respectively.  
This motivates the use of \emph{compressed verification},  
where the verifier privately replaces the large matrix  
\(\vec{M}\) with a much smaller secret key \(\VK\) derived  
via a compression matrix.

\VKeyGen computes \(\VK :=  
(\mathbf{I}_{n-k} \mid \mathbf{R})^\top \vec{C}\),  
where \(\vec{C}\) is the verifier’s compression matrix.  
Only the bottom \(n-k-c\) rows of \(\VK\) need to be stored,  
yielding a total size of \(c(n-c)/4\) bytes. \CVerify then 
replaces the full-rank check  
\(\vec{t}\vec{M} = \vec{0}\) with a lower-dimensional test  
\(\vec{t}\VK = \vec{0}\) over \(\FF_3^c\),  
as detailed in Algorithm~\ref{alg:vkverify_ripple}.
Compressed verification requires ``original'' \Wave signatures 
(as in~\cite{Wave}) and is incompatible with the truncated 
versions from~\cite{Wave-submission}. Although full signatures 
can be recovered from truncated ones using the public key, 
doing so during verification defeats compression by reintroducing 
computational and storage overhead.

\begin{algorithm}[htp]
    \caption{\texttt{CVerify} (compressed verification) for \Wave}\label{alg:vkverify_ripple}
    \KwIn{message $m$, signature $\sigma = (\salt,\vec{s})$, and verification key $\vk$}
    \KwOut{\Accept or \Reject}
\If{$\weight(\vec{s}) \neq W$}{
        \KwRet{\Reject}
    }
    $\vec{t} \gets \vec{s} - (\hash( \salt \parallel m )\parallel \vec{0}_k)$
    \;
    $\vec{r} \gets \vec{t}\vk$
    \;
    \If{$\vec{r} \neq 0$}{
        \KwRet{\Reject}
    }
    \KwRet{\Accept}
    \;
\end{algorithm}

The security argument for compressed \Wave
is particularly simple.
Theorem~\ref{thm:unforgeability} ensures that 
\(\Sigmacomp[\Kernel]\) is EUF-CMA secure if
\(\SGP(\KernelSet,\TSet(\Sigma))\)
is hard,
where \KernelSet is the set of codimension-$c$
subspaces of $\Module = \FF_3^{n-k}$
and \(\TSet(\Sigma) = \Module\).
We have \(\#\KernelSet = \threebinom{n-k}{n-k-c}\),
where \threebinom{\cdot}{\cdot} is the $3$-binomial coefficient,
\(\#(\Module/\Kernel) = 3^{n-k-c}\),
and \(\#\KernelSet_{\vec{t}} = \kappa_{\TSet(\Sigma)} = \threebinom{n-k-1}{n-k-c-1}\)
for \emph{every} \(\vec{t} \not= \vec{0} \in \Module\).
This tells us that
a naive adversary that simply tries
random forgery attempts \((m,\sigma)\) is already close to optimal.
Intuitively:
since \(\vec{C}\) is a random projection,
if \(\vec{t}\not=\vec{0}\) then \(\vec{t}\vec{C}\) is a random element of \(\FF_3^c\),
which is \(\vec{0}\) (leading to an \Accept) with probability \(1/3^c\).

Indeed, if we admit the heuristic of~\S\ref{sec:general-hardness}
then we can take \(2^\mu \approx 3^c\):
that is, \(c \approx \log_3(2)\mu\).
Taking \(c\) to be a multiple of 8 simplifies implementation.
Table~\ref{tab:Wave-parameters} lists suggested values of \(c\)  
and corresponding sizes of \CK and \VK
for \Wave[822], \Wave[1249], and \Wave[1644].

\begin{table}
    \caption{Parameters for (compressed) \Wave.
        Sizes are in bytes.
        \Wave signatures are variable-length:
        the values of \(|\sigma|\) here are upper bounds,
        and \(|\sigma|\) roughly doubles for the ``original''
        (non-truncated) signatures required for compressed verification.
    }
    \label{tab:Wave-parameters}
    \centering
    \resizebox{\textwidth}{!}{
        \begin{tabular}{l|cc|rr|r@{\;\;}r@{\;\;}r@{\;\;}rr}
            \toprule
            & \multicolumn{2}{c|}{Security level} 
            & \multicolumn{2}{c|}{Original scheme}
            & \multicolumn{5}{c}{Compressed verification}
            \\
            Instance & 
            NIST PQ & \(\lambda\) 
            & \(|\sigma|\) & \(|\PK|\)
            & \(c\) & \(\mu\) & \(|\CK|\) & \(|\VK|\) & \(|\PK|:|\VK|\)
            \\
            \midrule
            \Wave[822]
            & 1 & 128
            & 822 &  {3\,677\,390}
& 80 & 126.8 & {83\,822} & {171\,594} & {21.4}
            \\
            \Wave[1249]
            & 3 & 192
            & 1249 &  {7\,867\,598}
& 120 & 190.2 & {183\,134} & {266\,188} & {29.6}
            \\
            \Wave[1644]
            & 5 & 256
            & 1644 & {13\,632\,308}
            & 160 & 253.6 & {321\,507} & {370\,436} & {36.8}
\\
            \bottomrule
        \end{tabular}
    }
\end{table}

\bibliographystyle{abbrv}
\bibliography{references}

\appendix

\section{Subroutines for the explicit CRT
}\label{app:rns}

\label{sec:ECRT}

We maintain the notation of~\S\ref{sec:sq_mar}:
\(
    \squirrelsprimes 
    =
    (\squirrelsprime_1,\ldots,\squirrelsprime_\squirrelsprimecount)
\) 
is a list of distinct primes,
\(\Delta := \prod_{i=1}^{\squirrelsprimecount}\squirrelsprime_i\)
is their product,
\(\Delta_i := \Delta/\squirrelsprime_i\)
and \(q_i := \Delta_i^{-1}\pmod{\squirrelsprime_i}\)
for \(1 \le i \le \squirrelsprimecount\).
Algorithm~\ref{alg:qCoefficients}
computes \((q_1,\ldots,q_s)\).
Algorithm~\ref{alg:ModECRTSetup}
computes
\((\RNS{\Delta_i}{\secretprimes})_{i=1}^s\) 
and \(\RNS{\Delta}{\secretprimes}\).
given another list of primes
\(
    \secretprimes 
    = 
    (\secretprime_1,\ldots,\secretprime_\secretprimecount)
\)
(all prime to \(\Delta\)).
These algorithms are not optimal,
but they avoid multiprecision arithmetic.

\begin{algorithm}[ht]
    \caption{Explicit Modular CRT setup: \(q\)-coefficients.}
    \label{alg:qCoefficients}
    \KwIn{\(\squirrelsprimes = (\squirrelsprime_1,\ldots,\squirrelsprime_s)\)}
    \KwOut{\((q_1,\ldots,q_s)\)
        s.t. \(0 < q_i < \squirrelsprime_i\)
        and \(q_i(\Delta/\squirrelsprime_i) \equiv 1 \pmod{\squirrelsprime_i}\)
        for \(1 \le i \le s\)
    }
    \Function{\qCoefficients{\((\squirrelsprime_1,\ldots,\squirrelsprime_s)\)}}{
        \((q_1,\ldots,q_s) \gets (1,\ldots,1)\)
        \;
\ForEach{\(1 \le i \le s\)}{
            \ForEach{\(1 \le j \le s\), \(j \not= i\)}{
                \(q_i \gets q_i\cdot \squirrelsprime_j\bmod{\squirrelsprime_i}\)
            }
            \(q_i \gets q_i^{-1} \bmod{\squirrelsprime_i}\)
            \;
        }
        \Return{\((q_1,\ldots,q_s)\)}
    }
\end{algorithm}

\begin{algorithm}[ht]
    \caption{Explicit Modular CRT setup.}
    \label{alg:ModECRTSetup}
\KwIn{
        \(\squirrelsprimes = (\squirrelsprime_i)_{i=1}^\squirrelsprimecount\)
        and 
        \(\secretprimes = (\secretprime_j)_{j=1}^\secretprimecount\).
    }
    \KwOut{
        \((\RNS{\Delta_i}{\secretprimes})_{i=1}^\squirrelsprimecount\)
        and \(\RNS{\Delta}{\secretprimes}\).
    }
    \Function{\ModECRTSetup{\squirrelsprimes, \secretprimes}}{
        \(\vec{m} \gets \RNS{1}{\secretprimes}\)
        \;
\(
            (\vec{c}^{(1)},\ldots,\vec{c}^{(\squirrelsprimecount)})
            \gets
            (\RNS{1}{\secretprimes},\ldots,\RNS{1}{\secretprimes})
        \)
        \;
\ForEach{\(1 \le i \le \squirrelsprimecount\)}{
            \(
                \vec{u} \gets \RNS{p_i}{\secretprimes}
            \)
            \label{alg:ModECRTSetup:pimodrj}
            \tcp*{\(u_j = p_j\) or \(p_j-r_j\)}
\(
                \vec{m}
                \gets
                (
                    m_1u_1\bmod{\secretprime_1},
                    \ldots,
                    m_\secretprimecount
                    u_\secretprimecount\bmod{\secretprime_\secretprimecount}
                )
            \)
            \;
            \ForEach{\(1 \le j < i\) \textbf{and} \(i < j \le \squirrelsprimecount\)}{
                \(
                    \vec{c}^{(j)}
                    \gets
                    (
                        c^{(j)}_1 u_1\bmod{\secretprime_1},
                        \ldots,
                        c^{(j)}_\secretprimecount u_\secretprimecount\bmod{\secretprime_\secretprimecount}
                    )
                \)
                \;
            }
        }
        \Return{\((\vec{c}^{(1)},\ldots,\vec{c}^{(\squirrelsprimecount)})\), \(\vec{m}\)}
        \;
    }
\end{algorithm}

\end{document}